\newtheorem{theorem}{Theorem}[section]
\newtheorem{proposition}[theorem]{Proposition}
\newtheorem{definition}[theorem]{Definition}
\newcommand{\RR}{\mathbb{R}}
\newcommand{\CC}{\mathbb{C}}
\newcommand{\pop}[1]{\mathsf{#1}}
\newcommand{\nop}[1]{\widetilde{\mathsf{#1}}}
\newcommand{\ops}[1]{\mathbb{#1}}
\begin{document}

\title{Low bit-flip rate probabilistic error cancellation}

\author{Mathys Rennela, Harold Ollivier\\
  QAT, DIENS, Ecole Normale Supérieure, 45 rue d'Ulm, 75005 Paris\\
  Universite PSL, CNRS, INRIA}

\maketitle 

\begin{abstract}
  Noise remains one of the most significant challenges in the development of reliable and scalable quantum processors. While quantum error correction and mitigation techniques offer potential solutions, they are often limited by the substantial overhead required. To address this, tailored approaches that exploit specific hardware characteristics have emerged. In quantum computing architectures utilizing cat-qubits, the inherent exponential suppression of bit-flip errors can significantly reduce the qubit count needed for effective error correction. In this work, we explore how the unique noise bias of cat-qubits can be harnessed to enhance error mitigation efficiency. Specifically, we demonstrate that the sampling cost associated with probabilistic error cancellation (PEC) methods can be exponentially reduced with the depth of the circuit when gates act on cat-qubits and preserve the noise bias. Similar results also hold for Clifford circuits and Pauli channels. Our error mitigation scheme is benchmarked across various quantum machine learning circuits, showcasing its practical advantages for near-term applications on cat-qubit architectures.
\end{abstract}

\section{Introduction}
\label{sec:intro}

\paragraph{Context.}
Quantum error correction techniques aim to undo computational errors that unavoidably arise during the execution of quantum circuits. However, the physical qubit count overhead they impose make them impractical for smaller machines. In contrast, quantum error mitigation techniques focus on improving the accuracy of expectation values obtained by sampling the output from quantum circuits, thereby shifting the cost of handling errors from a space overhead to a time overhead. Although the scaling of the overhead for error mitigation is not favorable for larger circuits, it is currently the best approach to address noise for small quantum devices and unlocking their computational power for near term applications.

One such technique is Probabilistic Error Cancellation (PEC), which approximates the expectation value of an observable measured on a quantum state prepared by a noiseless circuit as the average value of measurement results taken over a quasi-probabilistic distribution of noisy quantum circuits (see \cite{zne-pec} for the original techniques and, e.g.~\cite{practical-pec,optimal-pec,limits-pec,ECBY21hybrid} for improvements and guarantees). By replacing the execution of a noiseless circuit with that of noisy but easily implementable ones, PEC yields an unbiased estimator for the expectation value of observables on the noiseless circuits.

Thanks to its appealing feature of being hardware and algorithm agnostic, PEC is easy to use and has been successfully applied to various use cases, including superconducting quantum architectures with crosstalk \cite{BMKT23probabilistic}, mitigation of Markovian noise \cite{EBL18practical}, and learning the noise associated with a given system \cite{SQCB21learning}.

Yet, this also means that PEC remains unaware of the error propagation structure within the quantum circuits of interest, potentially leading to suboptimal performance. This raises an important question, posed in \cite[Open Question 3]{CBBE23quantum}: can PEC be further optimized by tailoring it to a specific hardware or algorithm? Among current hardware proposal, cat-qubits---superconducting qubits with highly biased noise, considered as promising candidates for building fault-tolerant architectures~\cite{QCwithCats}---are of great interest. Can their unique properties be leveraged to reduce the overhead of error mitigation? More broadly, do cat-qubits offer advantages outside the fault-tolerant regime and into the NISQ regime, where error mitigation is crucial for unlocking the potential of near-term quantum devices?

\paragraph{Overview of results.}
In this paper, we provide a positive answer to the previous open questions. We do so by introducing a variant of Probabilistic Error Cancellation (PEC) that outperforms the standard approach in terms of sampling cost overhead for achieving a given target variance for the expectation value of an observable measured at the output of the noiseless circuit.

Indeed, the overhead of PEC is due to the necessity to sample noisy circuits to construct an approximation of an ideal noiseless circuit given as input. More precisely, when the ideal circuit is provided as a sequence of layered gates, PEC constructs a collection of noisy circuits from each layer by independently adding extra gates that mitigate the effect of the noise on average. This results in a quantum sampling cost that increases exponentially with the number of layers.

To address this limitation, we propose a modified error mitigation procedure, which we refer to as \emph{Block-PEC}. By grouping the added gates together and commuting them through the various noisy layers of the circuit, we can combine them into a single additional layer, effectively performing PEC through block sampling. This allows us to reduce the base and exponent of the exponential quantum sampling cost, trading part of the quantum sampling cost for a classical pre-processing step that calculates quasi-proba\-bilities. As a result, Block-PEC achieves a quantum sampling cost exponential in the qubit count of the circuit, rather than exponential in both the qubit count \emph{and} the circuit depth. We complement our theoretical analysis by giving performance guarantees under various biased noise models as well as numerical evalutation of the gain for several concrete circuit classes.

By considering circuits that are bias-preserving on a given subspace of the full Hilbert space, we can further relax the requirements of Block-PEC. This allows us to demonstrate that Block-PEC can be effectively applied to circuits with biased noise. Our numerical simulations using small-size circuits show that Block-PEC provides a clear improvement for parameterized quantum circuits formed by matchgates, particularly Reconfigurable Beam Splitter (RBS) gates, which are widely used in quantum machine learning algorithms for NISQ computers~\cite{unary-option-pricing,rbs-pde,low-depth-qae,qdeep-hedging,qvt}.

\paragraph{Related work.}
PEC is a well established error mitigation technique available in various open source packages (see e.g.~\cite{mitiq}). From~\cite{optimal-pec}, we know that standard PEC is optimal (in terms of sampling cost) when applied on one $n$-qubit layer subjected to dephasing noise. Moreover, it has been established that using a continuous set of ideal gates, containing all $Z$-axis rotations for arbitrary angles, would not help lower the cost of PEC, as explained in~\cite[Sec.~5]{optimal-pec}.

Previous works have nonetheless provided improvements to PEC by optimizing the sampling cost taking into account the gates within the past light-cone of a given observable, as only those contribute to the errors affecting the measured expectation value~\cite{local-qem}.

\paragraph{Outline.}
After an overview of the mathematical formalism of probabilistic error cancellation (Section~\ref{sec:preliminaries}), we introduce Block-PEC, and we outline the conditions under which it can be implemented (Section~\ref{sec:block-pec}). We then provide performance guarantees for Block-PEC (Section~\ref{sec:guarantees}), followed by a numerical analysis of the implementation of Block-PEC on concrete applications (Section~\ref{sec:applications}). We conclude in Section~\ref{sec:conclusion}, focusing on the prospects of this approach, and placing it in the context of hardware-specific error mitigation techniques.

\section{Preliminaries}
\label{sec:preliminaries}

\subsection{Quantum channels}
\label{sub:channels}

Quantum channels are completely positive trace preserving operators acting on density matrices. They will be written using sans-serif fonts such as $\pop C$. A unitary channel associated to a unitary operator $U$ is defined as $\pop U(\cdot) = U \cdot U^\dagger$. By convenience, we will often refer to $\pop U$ as a unitary operator.

While noise channels can in principle correspond to any CPTP maps, we restrict our study to convex combinations of unitary channels $\pop N = \sum_{\pop V \in \ops V} \alpha(\pop V) \pop V$ with $\alpha(\pop V)\geq 0$, $\sum_{\pop V} 
\alpha(\pop V) = 1$, and where $\ops V$ is a set of unitary channels (e.g. Pauli channels where the unitary belongs to the Pauli group). 

\begin{definition}[Noisy unitary channel]
The noisy version of a unitary channel $\pop U$ under the action of the noise channel $\pop N$ is written 
\begin{align}
    \widetilde{\pop U} = \pop N \circ \pop U
    = \sum_{\pop V \in \ops V} \alpha(\pop V) \pop V \circ \pop U.
\end{align}
Throughout the rest of the manuscript ``tilded'' channels refer to their noisy version.
\end{definition}

A layered circuit is a circuit which can be defined as a succession of unitaries $U=U_{d} \ldots U_{1}$ so that $d$ is the depth of the circuit, where $U_{l}$ is the unitary corresponding to the $l$-th layer of the circuit. With our notation,  $\pop U_{l}$ is the channel for the $l$-th layer, so that the full circuit is a unitary channel $\pop U = \pop U_d \circ \cdots \circ \pop U_1$. The noisy variant of the quantum channel $\pop U$ is given by
\begin{align}
    \widetilde{\pop U} 
    = \sum_{\pop V \in \ops V_1 \times \cdots \times \ops V_d}\alpha(\pop V) \pop V_d \circ \pop U_d \circ \cdots \circ \pop V_1 \circ \pop U_1,
\end{align}
where $\alpha(\pop V) = \prod_{l = 1}^d \alpha_l(\pop V_l)$ for $\pop V = (\pop V_1,\ldots,\pop V_d)$,  and each $\ops V_l$ defined as the generating set of unitaries associated to the noise channel $\pop N_l = \sum_{\pop V_l \in \ops V_l} \alpha_{l}(\pop V_l) \pop V_l$.

\subsection{Cat-qubits and dephasing noise}
\label{sub:cats}

Cat-qubits are superconducting qubits which can be physically implemented in such a way that bit flip errors are exponentially suppressed, at the cost of a linear increase of the phase flip error rate (see~\cite{QCwithCats} for a detailed presentation of cat-qubits). In other words, cat-qubits are associated with a biased noise model, heavily biased towards phase flip errors. 

There is experimental evidence that phase flips errors are drastically rarer than bit flip errors ($10^8$ times rarer in the experiments in~\cite{100secs-bitflip}, for example). In the present work, we choose to ignore bit flips altogether (we further discuss this choice in Appendix~\ref{appendix:dephasing}). To make our results more concrete, we make use of two simplified dephasing noise models.

The first one inserts correlated phase flips immediately after each gate of the circuit. More precisely, we chose to consider that after a gate acting on the set of $n$ qubits $A$, the following channel acts upon these qubits:
\begin{align}
    \pop C^{[A]}_p = (1-p)\pop I^{\otimes n} + \sum_{I^{\otimes n} \neq B \subseteq A} \frac{1}{2^n-1} p Z_{[B]}, 
\end{align} 
where 
$p$ is the $1$-qubit phase flip error probability, and $Z_{[B]}$ is the Pauli unitary operator defined for any set $B \subseteq A$ of qubits by
\begin{align}
    Z_{[B]} = \left(\bigotimes_{i \in B} Z_i\right)
    \otimes \left(\bigotimes_{i \in A \setminus B} I_i\right),
\end{align}
so that $Z_\emptyset = I^{\otimes n}$. Such correlations have been observed in cat-qubit architectures for 1 and 2-qubit gates (see e.g.~\cite{QCwithCats}).

The second model is a more drastic approximation but yields simpler analytic calculations.  It consists of applying single qubit dephasing channels on all the qubits a gate is acting upon. In effect, for a gate acting on the set of qubits $A$, the channel applied after the gate is
\begin{align}
    \pop F_p^{[A]} = \sum_{B \subseteq A:|B|=m} (1-p)^{n-m}p^m \pop Z_{[B]}.
\end{align}
In practice, this latter model neglects correlations between the errors within the set of qubits affected by the gate. In Appendix~\ref{appendix:detailed-performance-analysis}, we justify that Block-PEC retains its performance when error correlations are taken into account.

\subsection{Overview of Probabilistic Error Cancellation}
\label{sub:overview-pec}

To compensate the effect of noise, error mitigation techniques such as PEC intend to apply the inverse of the noise channel to the affected quantum systems. While it is not a physically realizable operation as soon as the noise channel is not unitary, it is nonetheless well defined mathematically. 

Indeed (see~\cite{zne-pec,practical-pec} and also Appendix~\ref{app:pec}), for large classes of noise models, one can rewrite the channel $\pop U$ corresponding to a noiseless $d$-layered circuit as a quasi-probabilistic distribution of noisy circuits~\cite{practical-pec,zne-pec}:
\begin{align}
    \pop U = \sum_{\pop V \in \ops V} \alpha(\pop V)
    \pop N_d \circ \pop V_d \circ \pop U_d \circ \cdots \circ \pop N_1 \circ \pop V_1 \circ \pop U_1,
\end{align}
where $\alpha(\pop V)= \alpha_{1}(\pop V_1) \cdots \alpha_{d}(\pop V_d)$ for $\pop V = (\pop V_1,\ldots,\pop V_d) \in \ops V$ where the $\alpha_{l}(\pop V_l)$'s are
given by the resolution of the system of equations defined for each layer by
\begin{align}
    \pop I^{\otimes n_l} = \sum_{\pop V_l \in \ops V_l} \alpha_{l}(\pop V_l)
    \pop N_l \circ \pop V_l,
\end{align}
where $\sum_{\pop V_l} \alpha_{l}(\pop V_l) = 1$, and $n_l$ is the number of qubits that the $l$-th layer $\pop U_l$ acts non-trivially on. 

Under the assumption that the noisy unitary channels $\widetilde{\pop B_l} = \pop N_l \circ \pop V_l \circ \pop U_{l}$ correspond to physically implementable gates, 
PEC determines the expectation value of the noiseless circuit $\pop U$ by randomly sampling a noisy gate $\widetilde{\pop B_l}$ for each layer, and executing the circuit corresponding to $\widetilde{\pop B_d} \cdots \widetilde{\pop B_1}$.

The noiseless expectation value $\langle O\rangle_{\text{ideal}} = \Tr[O\pop U(\rho_0)]$ for some observable $O$ (without loss of generality, $\|O\|_\infty = 1$), and given the initial state $\rho_0 = \ketbra{0}{0}^{\otimes n}$ is such that
\begin{align}
    \langle O\rangle_{\text{ideal}} = \sum_{\pop V_1,\ldots,\pop V_d \in \ops V} 
    \alpha_{\pop V_1}\cdots\alpha_{\pop V_d} \cdots
    \langle O\rangle_{\pop V_1,\ldots,\pop V_d}.
\end{align}
where
\begin{align}
    \langle O\rangle_{\pop V_1,\ldots,\pop V_d} = \Tr[O \widetilde{\pop B_{d}} \cdots \widetilde{\pop B_{1}}(\rho_0)].
\end{align}

Hoeffding's inequality states that 
the number of samples $S$ required to attain precision $\delta$ with PEC, with probability at least $1-\varepsilon$, is $S \geq \frac{\gamma_{\text{total}}^2}{2\delta^2} \ln(\frac{2}{\varepsilon})$, where $\gamma_{\text{total}} = \gamma_d\cdots\gamma_1$.
In this context, the sampling cost $\gamma_{\text{total}}$ entirely determines the performance of PEC as an error mitigation technique.

In a conventional use of PEC, the sampling cost $\gamma_{\text{total}}$ is in $O(\gamma^{nd})$, where $n$ is the qubit count, $d$ is the circuit depth, and $\gamma \simeq 1+kp+O(p^2)$ for some constant $k$ determined by the noise model (for dephasing noise, $k=2$), with $1$-qubit error rate $p$. This makes PEC a reasonable mitigation strategy in the presence of weak gate noise~\cite{optimal-pec}. 

\subsection{Inverting Pauli noise}
\label{sub:inverting-pauli-noise}

As outlined in the previous section, the key requirement for PEC is the ability to write perfect gates as a quasi-probabilistic combination of noisy ones. Whenever the noise channel applied after each ideal gates is a convex combination of Pauli channels, this can be performed rather straightforwardly by writing the inverse as a Taylor series.

Additionally, and this will be crucial for Block-PEC, whenever the above convex combination can be restricted to subgroup of the Pauli group, so does the quasi-probabilistic distribution of the inverse.

\begin{theorem}[Adapted from~\cite{ECBY21hybrid}]\label{thm:invert}
  The inverse of a Pauli-noise channel $\pop N = (1-p)\pop I + p\pop A$, where $\pop A$ is a convex combination over Pauli channels, is a quasi-probabilistic distribution of Pauli channels.

  Furthermore, if the non trivial Pauli-channels in $\pop A$ correspond to Pauli operators  that form a group, so do the non trivial terms in the quasi-probabilistic distribution of $\pop N^{-1}$.

  $\pop N^{-1}$ can be approximated by a channel of the form $(1+p)\pop I - p\pop A$.
\end{theorem}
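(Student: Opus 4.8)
The plan is to pass to the Pauli transfer matrix (PTM) representation, in which conjugation by a Pauli operator acts diagonally. Writing $\pop A = \sum_{P} a_P \pop P$ with $a_P \ge 0$ and $\sum_P a_P = 1$, the PTM of $\pop A$ is diagonal with the entry indexed by the Pauli operator $Q$ equal to $\hat a(Q) = \sum_P a_P (-1)^{\langle P, Q\rangle}$, where $\langle P, Q\rangle \in \{0,1\}$ records whether $P$ and $Q$ anticommute. Consequently the PTM of $\pop N = (1-p)\pop I + p\pop A$ is diagonal with entries $(1-p) + p\,\hat a(Q)$, and since $|\hat a(Q)| \le 1$ these lie in $[1-2p,\,1]$, hence are strictly positive for $p < 1/2$. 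Thus $\pop N$ is invertible and $\pop N^{-1}$ is again diagonal in the Pauli basis, with positive real entries $\big((1-p)+p\,\hat a(Q)\big)^{-1}$; inverting the discrete (Walsh--Hadamard) transform yields $\pop N^{-1} = \sum_P q_P\, \pop P$ with $q_P = 4^{-n}\sum_Q (-1)^{\langle P, Q\rangle}\big((1-p)+p\,\hat a(Q)\big)^{-1} \in \RR$. Since every $P$ commutes with $Q = I$, the $Q=I$ entry of $\pop N^{-1}$ equals $1$, which is exactly $\sum_P q_P$; this establishes that $\pop N^{-1}$ is a quasi-probabilistic combination of Pauli channels.

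For the group statement I would argue via the Neumann series. Write $\pop N = \pop I - p(\pop I - \pop A)$; by the previous paragraph the PTM of $p(\pop I - \pop A)$ has spectral radius at most $2p < 1$, so $\pop N^{-1} = \sum_{k \ge 0} p^k (\pop I - \pop A)^k$ converges. Expanding $(\pop I - \pop A)^k$ by the multinomial theorem produces a finite real linear combination of $k$-fold composites $\pop{P_{i_1}} \circ \cdots \circ \pop{P_{i_k}}$ with each $P_{i_j}$ appearing in $\pop A$, together with the identity; at the level of channels such a composite is $\pop P$ for $P = P_{i_1}\cdots P_{i_k}$ (global phases being irrelevant). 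If the Pauli operators supporting $\pop A$, together with $I$, form a group $G$, then closure under multiplication forces each such composite to be $\pop P$ with $P \in G$, so every partial sum, and therefore the limit $\pop N^{-1}$, is supported on $\{\pop P : P \in G\}$. Equivalently, $\{\pop P : P \in G\}$ spans a finite-dimensional commutative subalgebra of the Pauli-diagonal maps that contains $\pop N$ and is closed under inversion of its invertible elements. Finally, the approximation claim is just the order-$p$ truncation of the same series: $\pop N^{-1} = \pop I + p(\pop I - \pop A) + O(p^2) = (1+p)\pop I - p\pop A + O(p^2)$, which is manifestly of the stated form and supported on $G$.

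The only real subtlety --- and hence the main obstacle --- is the analytic bookkeeping: making precise that a convex combination of Pauli channels is simultaneously diagonalizable in the Pauli basis, identifying the regime (here $p < 1/2$, or more generally wherever $(1-p)+p\,\hat a(Q)$ stays bounded away from $0$) in which $\pop N^{-1}$ exists and the Neumann series converges, and checking that the inverse-transform coefficients $q_P$ are genuinely real and sum to $1$. Once this is in place the group closure is immediate, since a subgroup is by definition closed under the composition law inherited by the Pauli channels, and the first-order approximation follows with no further work.
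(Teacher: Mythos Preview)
Your proposal is correct, and for the second and third claims it essentially coincides with the paper's argument: both expand $\pop N^{-1}$ as a Neumann series (you factor $\pop N = \pop I - p(\pop I - \pop A)$, the paper factors $\pop N = (1-p)\bigl(\pop I + \tfrac{p}{1-p}\pop A\bigr)$), observe that every term in the expansion is a real combination of composites $\pop P_{i_1}\circ\cdots\circ\pop P_{i_k}$ supported on the group, and then truncate at first order to obtain $(1+p)\pop I - p\pop A$.

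Where you genuinely diverge is on the first claim. The paper derives it directly from the same series, simply noting that each $\pop A^i$ is a convex combination of Pauli channels. You instead pass to the Pauli transfer matrix picture, diagonalise $\pop N$ explicitly, read off the invertibility range $p<1/2$, and recover the coefficients $q_P$ by a Walsh--Hadamard inversion, checking along the way that they are real and that $\sum_P q_P = 1$ via the $Q=I$ eigenvalue. This buys you a cleaner statement of when the inverse exists and an explicit formula for the quasi-probabilities, at the cost of introducing the PTM machinery; the paper's version is shorter but leaves those analytic points (existence of the inverse, normalisation of the coefficients) implicit in the convergence of the series.
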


\begin{proof}
  Consider a noise channel $\pop N = (1-p)\pop I + p \pop A = (1-p)\pop I + \sum_{\pop Q \in \mathbb Q} p_{\pop Q} \pop Q$, where each $\pop Q$ is a Pauli channel associated to a probability $p_{\pop Q}$, so that $\sum_{\pop Q} p_{\pop Q} = 1$, and such that $ \ops Q$ is a subgroup of the Pauli group.

  Then,
  \begin{align}
    \pop N^{-1}
    & = \left((1-p)\pop I + p\pop A \right)^{-1} \\
    & = (1-p)\left( \pop I +  \sum_{i>0}\left(  -\frac{p}{1-p} \pop A \right)^{i} \right) \\
    & \approx (1+p) \pop I -p\pop A.\label{eq:taylor}
  \end{align}

  The first and second claim of the theorem follow from the group structure of $\ops Q$  which guarantee that each term in the expansion of $\pop A^{i} = \left( \sum_{\pop Q \in \mathbb Q} p_{\pop Q} \pop Q\right)^{i}$ is in $\ops Q$.

  The third claim (Eq.~\ref{eq:taylor}) is obtained by truncating the Taylor series to order 1.
\end{proof}

Note that this inversion assumes that the noise channel $\pop N$ is close to the identity. In accordance with this assumption, the gates $\widetilde{\pop B_l}$ that will be used to invert $\pop N$ will introduce a noise that will be of similar magnitude. This means, following Eq,~\ref{eq:taylor}, that this effect will be of order 2 and can thus be neglected in our computation.

\section{Block Probabilistic Error Cancellation}
\label{sec:block-pec}

The standard PEC procedure outlined in Section~\ref{sub:overview-pec} is oblivious to the structure of the implementable gate set and noise model save for the actual quasi-probabilistic decomposition of each noisy gate. In this section, we go beyond this generic approach by using a property of the implementable gate set that we call \emph{Pauli-$Z$ compatibility} to modify the Monte Carlo sampling performed in the PEC procedure. 

This yields a lower sampling cost --- i.e. the number of quantum circuits to run --- provided some classical pre-computations can be performed efficiently. We call this improved method \emph{block probabilistic error cancellation} (Block-PEC), as it rearranges the sampling of noisy gates on a per-block basis rather than layer by layer. 

After defining Pauli-$Z$ compatibility, we detail how to implement Block-PEC, how to efficiently (classically) compute the coefficients of its quasi-probabilistic distribution, and extend the technique to consider generic gate sets which contain gates which are not Pauli-$Z$ compatible.

\subsection{Pauli-$Z$ compatibility}
\label{sub:pauli-z-compatibility}

In order to capture a useful commutation property between implementable gate sets and noise, we define the following property.

\begin{definition}[Pauli-$Z$ compatibility]
Let $\ops N$, $\ops U$ and $\ops V$ be three sets of unitary operators acting on at most $n$ qubits, where, 
\begin{enumerate}
  \item[(P1)] $\ops V$ is a group, and a subset of $\ops U$,
  \item[(P2)] $\ops V$ is stable by conjugation with elements of $\ops U$, i.e.
\begin{equation}
  \forall \pop V \in \ops V, \forall \pop U \in \ops U, \exists \pop V' \in \ops V, s.t. \pop U \circ \pop V = \pop V' \circ \pop U;
  \end{equation}
\item[(P3)] $\ops V$ commutes with all $\ops N$, i.e.
\begin{equation}
  \forall \pop V \in \ops V, \forall \pop N \in \ops N,  \pop N \circ \pop V = \pop V \circ \pop N.
  \end{equation}
\end{enumerate}
In such case, $(\ops U, \ops V)$ is said to be $\ops N$-\emph{compatible}. When $\ops N$ is the set of convex combinations of unitary channels made of \emph{Pauli-$Z$ strings} --- i.e. $n$-fold tensor product of single qubit $I$ and $Z$ --- we say that $(\ops U,\ops V)$ is \emph{Pauli-$Z$ compatible}.

For convenience, we will say that a gate is Pauli-$Z$ compatible if is in a set $\ops U$ such that $(\ops U, \ops V)$ is Pauli-$Z$ compatible. A circuit is Pauli-$Z$ compatible if it is generated from a Pauli-$Z$ compatible gate set.
\end{definition}

In this definition, $\ops U$ above represents perfect implementable gates for a given architecture. (P1) states that it is possible to single-out a subgroup $\ops V$ of $\ops U$ that will be used to apply additional controls after each intended gate. (P2) states that given a sequence of implementable gates, one can apply the additional controls described above at the very end of the circuit. 

To make this concrete imagine a 2-gate circuit $\pop U_2 \circ \pop U_1$ with $\pop U_1, \pop U_2 \in \ops U$, and that one wants to implement instead $\pop V_2 \circ \pop U_2 \circ \pop V_1 \circ \pop U_1$, with $\pop V_1, \pop V_2 \in \ops V$. Then, using (P2), there exists $\pop V'_1$ such that $\pop V_2 \circ \pop U_2 \circ \pop V_1 \circ \pop U_1 = \pop V_2 \circ \pop V'_1 \circ \pop U_2 \circ \pop U_1$. Using (P1), $\pop V_2 \circ \pop V'_1$ can be replaced by $\pop V'\in \ops V$. Finally (P3) guarantees that, in spite of noise operators chosen in $\ops N$, the previous pulling of additional controls at the end of the circuit can be performed without modification. For this, let $\pop N_1$ and $\pop N_2$ in $\ops N$ be the noise channels applied after $\pop U_1$ and $\pop U_2$ respectively, then (P3) gives $ \pop V_2 \circ \pop N_2 \circ \pop U_2 \circ \pop V_1 \circ \pop N_1 \circ \pop U_1 = \pop V' \circ \pop N_2 \circ \pop U_2 \circ \pop N_1 \circ \pop U_1$, with the same $\pop V'$ as in the noiseless case. 

With cat-qubit architectures in mind, we can already develop an example of Pauli-$Z$ compatibility. For instance, take $\ops U = \ops V \cup \{\pop{Z}(\theta), \pop{ZZ}(\theta), \pop{CNOT}\}$, where $Z(\theta) = \exp\left(- i \frac{\theta}{2} Z\right)$, $ZZ(\theta) = \exp\left(- i \frac{\theta}{2} Z \otimes Z\right)$, and $\ops V$ are all Pauli-$Z$ strings. Then, a straightforward calculation leads to establishing the Pauli-$Z$ compatibility of $(\ops U, \ops V)$.

Moreover, the noise model associated to $\ops N$ is dephasing, and therefore, the physical intuition behind Pauli-$Z$ compatibility is that any circuit made of gates in $\ops U$ can be sampled from as a block, by performing all the additional controls of the error cancellation procedure at the end of the circuit, rather than after each gate.

Note that if one instead restricts $\ops U$ to a subset of the Clifford group such that it contains the Pauli group, then for $\ops V$ the Pauli group and $\ops N$ the set of Pauli channels, properties P1 to P3 above are satisfied so that it can be said that $(\ops U, \ops V)$ is \emph{Pauli-compatible}.

\begin{figure}[ht!]
    \centering
    \begin{tabular}{c}
        \includegraphics[width=0.8\linewidth]{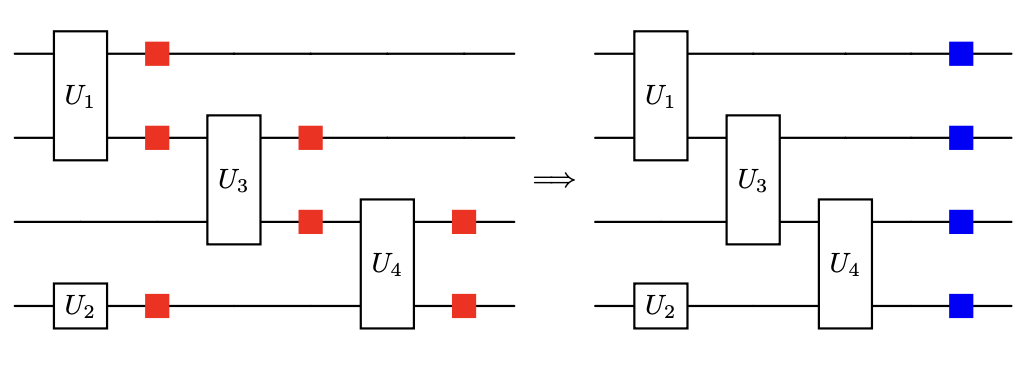}
    \end{tabular}
    \caption{As shown in the example depicted above, Block-PEC samples from the circuit on the left as a single block of the equivalent circuit on the right, where all of the error locations (in red) have been commuted to the right of the circuit and aggregated (in blue).}
    \label{tab:std-to-blk-pec}
\end{figure}

\subsection{Block probabilistic error cancellation}
\label{sub:block-pec-non-uniform}

Consider $(\ops U, \ops V)$ and $\ops N$ satisfying P1 to P3 above, and a layered circuit $\pop U=\pop U_d \circ \cdots \circ \pop U_1$ with layers $\pop U_l \in \ops U$, each associated to their own noise channel $\pop N_l$, forming a non-uniform noise model. Each layer $\pop U_l$ can be decomposed into a  quasi-probabilistic distribution of layers obtained by applying a perfect additional control:
\begin{equation}
\pop U_l = \sum_{\pop V_l \in \ops V} \alpha_{\pop U_l}(\pop V_l) \pop V_l \circ \nop U_l,
\end{equation}
where $\alpha_{\pop U_l}(\pop V_l)$ are real numbers that depend on the target transformation $\pop U_l$ and the effective noise $\pop N_l$ at layer $\pop U_l$. 

We can now commute all the additional controls $\pop V_l$ to the end of the circuit. This results  into a quasi-probabilistic distribution corresponding to $\pop U$:
\begin{align}
\pop U
& = \pop U_d \circ \cdots \circ \pop U_1 \nonumber \\
& = \sum_{\pop V_1, \cdots \pop V_d \in \ops V} \Bigr[\alpha_{\pop U_d}(\pop V_d) \cdots \alpha_{\pop U_1}(\pop V_1)\times \nonumber \\
& \qquad \quad \left.\left(\pop V'_d \circ \cdots \circ \pop V'_1 \right) \circ \left(\nop U_d \circ \cdots \circ \nop U_1\right)\right] \nonumber\\
& = \sum_{\pop V \in \ops V} \alpha_{\pop U}(\pop V') \pop V' \circ \nop U, \label{eq:qp_w_perfect_control}
\end{align}
where the last equality uses the group structure of $\ops V$, so that $\pop V'$ is associated to a depth-1 circuit formed by the product $\prod_{l=1}^d \pop V'_l$ for $\pop V'_l$ defined as the result of commuting $\pop V_l$ through the noise channels $\pop N_k$ for $k\geq l$ and the gates $\pop U_k$ for $k > l$, and where 
\begin{align}
    \alpha_{\pop U}(\pop V') = \sum_{\pop V_1, \cdots \pop V_d \in \ops V} \alpha_{\pop U_d}(\pop V_d)\cdots \alpha_{\pop U_1}(\pop V_1),
\end{align}
for $\pop V_1, \cdots \pop V_d$ such that $\pop V'_{d}\cdots \pop V'_{1} = \pop V'$.

Now, we can apply a similar decomposition for $\pop V' = \sum_{\pop W} \beta_{\pop V'}(\pop W)\nop W$ so that when replaced in Eq.\,\ref{eq:qp_w_perfect_control}, we obtain a new quasi-probabilistic decomposition for $\pop U$, using only noisy implementable gates:
\begin{equation}
\pop U = \sum_{\pop W \in \ops V} \delta_{\pop U}(\pop W) \nop{W} \circ \nop U, \label{eq:qp_w_noisy_control}
\end{equation}
with $\delta_{\pop U}(\pop W) = \sum_{\pop V' \in \ops V} \alpha_{\pop U}(\pop V')\beta_{\pop V'}(\pop W)$.

The interest is that instead of requiring a sample for each layer of the circuit, only a single layer needs to be sampled as long as the new coefficients $\delta_{\pop U}(\pop W)$ can be efficiently computed classically as to ensure that sampling from this distribution can be done efficiently.

To understand why our per-block approach can reduce the sampling overhead, consider a limiting case where the additional controls $\pop V_i$ as well as $\pop V$ can be applied without introducing additional noise. The standard probabilistic error cancellation gives:
\begin{align}
  \pop U = \sum_{\pop V_1, \cdots, \pop V_d \in \ops V} \alpha_{\pop U_1}(\pop V_1) \cdots \alpha_{\pop U_d}(\pop V_d) \pop V_d \circ \nop U_d \circ \cdots \circ \pop V_1 \circ \nop U_1,
\end{align}
which in turn requires a sampling overhead that scales as the square of 
\begin{align}
    \gamma_{\text{std}} \coloneqq \sum_{\pop V_1, \ldots \pop V_d \in \ops V} |\alpha_{\pop U_1}(\pop V_1) \cdots \alpha_{\pop U_d}(\pop V_d)|, \label{eq:gamma_std}
\end{align}
see \cite{zne-pec}.

For our per-block method this is instead equal to 
\begin{align}
\label{eq:block-pec-gamma}
    \gamma_{\text{blk}} \coloneqq \sum_{\pop V' \in \ops V} \left | \sum_{\substack{\pop V_1, \ldots, \pop V_d \in \ops V \\ \pop V'_d \circ \cdots \circ \pop V'_1 = \pop V'}} \alpha_{\pop U_1}(\pop V_1) \cdots \alpha_{\pop U_d}(\pop V_d)\right|. 
\end{align}
The inequality $\gamma_{\text{std}} \geq  \gamma_{\text{blk}}$ follows from the triangle inequality.

Equations~\ref{eq:gamma_std} and \ref{eq:block-pec-gamma} provide two insights in the structure of the advantage one can gain by applying Block-PEC over standard PEC. First, whenever it can be applied, it is never detrimental to the sampling overhead. Second, and most importantly, bigger gains will be obtained whenever, for a given $\pop V' \in \ops V$, the sum of the contributing $\alpha_{\pop U_1}(\pop V_1) \cdots \alpha_{\pop U_d}(\pop V_d)$ terms to $\pop V'$ in Equation~\ref{eq:block-pec-gamma} is small compared to the sum of their absolute values in Equation~\ref{eq:gamma_std}. This requires having terms with different signs, and as there are a number of terms that grows exponentially with $d$, one can expect up to an exponential gain in $d$ by using Block-PEC. We will show in Section~\ref{sec:guarantees} that this is the case for broad classes of circuits and realistic noise models. Section~\ref{sec:applications} explore the gain on end-to-end applications, some of which can run entirely on bias-preserving gates, adding to the interest of Block-PEC for cat-qubits.

We will show in Section~\ref{sec:guarantees} that this is the case for broad classes of circuits and realistic noise models with bias-preserving gates. Section~\ref{sec:applications} explores the gain on end-to-end applications, some of which can run entirely on bias-preserving gates, adding to the interest of Block-PEC for cat-qubits.

Note that the described gain is not restricted to cat-qubits: Similar results can be obtained for Pauli-compatible circuits---i.e. Clifford circuits with Pauli controls under Pauli noise. This regime is explored in~\cite{SHH24reducing} that appeared after the completion of this work.

\subsection{Computing Block-PEC coefficients}
\label{sub:block-pec-coefficients}

Calculating naively the coefficients of Block-PEC as sums of products of coefficients of conventional PEC as presented in Equation~\ref{eq:block-pec-gamma} can take up to $2^{2n(d-1)}$ operations, for a $d$-layer circuit with $d \geq 2$. 

This classical pre-processing cost can be drastically reduced by calculating Block-PEC coefficients recursively, building up on the calculations of the coefficients in sub-circuits.

Indeed, one can observe that any per-block coefficient can be rewritten as follows:
\begin{align}
  \alpha_{\pop U}(\pop V')
    & = \sum_{\substack{\pop V_1, \ldots \pop V_d \in \ops V \\ \pop V'_d \circ \cdots \circ \pop V'_1 = \pop V'}} \alpha_{\pop U_1}(\pop V_1) \cdots \alpha_{\pop U_{d}}(\pop V_{d})\\
    & = \sum_{\substack{\pop V_{d} \in \ops V}} 
    \alpha_{\pop U_{d}}(\pop V_{d}) \times \nonumber \\
    & \quad
    \sum_{\substack{\pop V_1, \ldots, \pop V_d \in \ops V \\ \pop V'_{d} \circ \cdots \circ \pop V'_1 = \pop V' }}
    \alpha_{\pop U_1}(\pop V_1) \cdots \alpha_{\pop U_{d-1}}(\pop V_{d-1})\\
    & = \sum_{\substack{\pop V_{d} \in \ops V}} 
    \alpha_{\pop U_{d}}(\pop V_{d})
         \alpha_{\pop U_{d-1} \circ \cdots \circ \pop U_1}(\pop W'),
\end{align}
where $\pop W' = \pop V_d^{-1} \circ \pop V'$ should be thought of as the result of commuting $\pop V_{1}$ to $\pop V_{d-1}$ through all gates and noise channels to the end of the circuit until  reaching, but not crossing, the $d$-th layer.

This opens the door to a recursive calculation of per-block coefficients, for the sub-circuit $\pop U_{l+1}\pop U_l\cdots\pop U_1$, from the per-block coefficients for the sub-circuit  $\pop U_{l}\cdots\pop U_1$.

\begin{algorithm}
    \caption{Block-PEC coefficient calculation}
    \begin{algorithmic}
    \Function{compute-coefficients}{$\pop U_{l+1} \circ \cdots \circ \pop U_1$, $\pop V'$}
    \If{$l = 1$}
        \State \Return $\textbf{list}(\alpha_{\pop U_2}(\pop V_2) \alpha_{\pop U_{1}}(\pop V_{1}) \textbf{ for } \pop V_2,\pop V_1 \in \ops V^2)$
    \EndIf
    \ForAll{$\pop V_{l+1} \in \ops V$} 
        \State $\beta(\pop V_{l+1}) \gets 0$
        \ForAll{$\pop W \in \ops V$} 
            \If{$\pop V'_d \circ \pop W' = \pop V'$} 
                \State $\beta(\pop V_{l+1}) \gets \beta(\pop V_{l+1})
                + \alpha_{\pop U_{d}}(\pop V_{d}) \cdot 
                \Call{compute-coefficients}{\pop U_l \circ \cdots \circ \pop U_1,\pop W'}$
            \EndIf    
        \EndFor
    \EndFor
    \State \Return $\textbf{list}(\beta(\pop V_{l+1}) \textbf{ for } \pop V_{l+1} \in \ops V)$
    \EndFunction
    \end{algorithmic}
\end{algorithm}

For any $n$-qubit $\pop V \in \ops V$, where $\ops V$ is the set of Pauli-$Z$ string, calculating $\alpha_{\pop U_2 \pop U_1}(\pop V)$ takes at most $|\ops V|^{2} \leq 2^{2n}$ operations. 
Similarly, calculating each $\alpha_{\pop U_{l+1} \pop U_l\cdots \pop U_1}(\pop V)$ knowing $\{\alpha_{\pop U_l\cdots \pop U_1}(\pop V)\}_{\pop V}$ takes at most $|\ops V|^{2} \leq 2^{2n}$ operations. 
Starting from $2^n$ coefficients, one needs to perform $2^n$ multiplications and additions to compute all the per-block coefficients $\{\alpha_{\pop U_{l+1} \pop U_l\cdots \pop U_1}(\pop V)\}_\pop V$.

All in all, calculating recursively all per-block coefficients of a $d$-layered circuit takes at most $(d-1)2^{2n}$ operations, instead of $2^{2n(d-1)}$ for the naive approach.

The effect of Block-PEC can be seen as replacing the relatively easy sampling method of the conventional PEC approach with a more complex sampling which extracts some classically computable quantities and in turn reduces greatly the number of quantum samples required for a given precision.

\subsection{Hybrid Block-PEC for circuits with generic gates}
\label{sub:hybrid-block-pec}

We have previously established that $(\ops U, \ops V)$ is Pauli-$Z$ compatible, where $\ops V$ is the set of all Pauli-$Z$ string, and $\ops U = \ops V \cup \{\pop{Z}(\theta), \pop{ZZ}(\theta), \pop{CNOT}\}$. Both the $T$ gate ($T = Z(\pi / 4)$) and the phase gate ($S = Z(\pi / 2)$) are in $\ops U$.  The Hadamard gate $H$ together with the set $\ops U$ form a universal gate set, but $(\ops U \cup \{H\},\ops V)$ is not Pauli-$Z$ compatible. 

Nonetheless, our per-block technique can be applied to Pauli-$Z$ compatible sub-circuits.  This results in a hybrid Block-PEC / PEC simulation of the noiseless circuit using samples obtained from the noisy quantum computer. The advantage given by the Block-PEC technique remains, as the lower sampling cost of each sub-circuit directly reduces the sampling cost of the full circuit.

In this context, it would be beneficial to implement circuit rewriting rules which optimize quantum circuit layouts for our error mitigation schemes for example, by optimizing the positions of Hadamard gates within the circuit or to move Hadamard gates along the wires of the circuit. We leave this for future research.

\section{Performance analysis of Block-PEC}
\label{sec:guarantees}

Section~\ref{sec:block-pec} introduced the possibility of having up to an exponential gain in using Block-PEC as the number $d$ of consecutive layers of bias-preserving gates grows. Yet, to make such statement precise, it is necessary to define a gate-set for the bias-preserving gates, noise models and most importantly a class of circuits of interest for which to conduct the analysis of the sampling gain. Section~\ref{sub:performance-analysis} performs this analysis by first considering a \emph{single} bias-preserving gate giving analytic performance guarantees, and then extending the analysis for \emph{several} layers of bias-preserving circuits. Two cases are studied here, relatively shallow circuits with random arrangements of bias-preserving gates, and deeper circuits corresponding to $SWAP$ networks~\cite{OHRW19generalized} that are used for superconducting NISQ architectures. Section~\ref{sub:partial} extends the latter to a wide class of circuits that can be directly used to construct applications. The impact on end-to-end applications is considered in  Section~\ref{sec:applications}.

\subsection{Strictly bias-preserving gates}
\label{sub:performance-analysis}
\paragraph{Pauli-$Z$ compatibility.}

A gate is said to be bias-preserving if it does not turn phase-flip errors into bit-flip errors. In other words, phase-flip errors are mapped to phase-flip errors, preserving phase-flip biases (see Appendix~\ref{sec:bias-preserving} for a complete mathematical characterization). 

\begin{definition}
A $n$-qubit gate is bias-preserving if its unitary matrix representation $G$ is such that for every $n$-by-$n$ unitary diagonal matrix $D$, there is a $n$-by-$n$ unitary diagonal matrix $D'$ such that $GD = D'G$.
\end{definition}

Examples of bias-preserving gates are the gates $X$, $Z$, $CNOT$, $CZ$, the $Z$-axis rotation gates $Z(\theta)$ and $ZZ(\theta)$, and the Toffoli gate, along with any gate which can be implemented by a sequence of bias-preserving gates. 
In what follows, we determine which bias-preserving gates are Pauli-$Z$ compatible.

\begin{proposition}
\label{prop:bp-pauli-z}
    The gates X, CZ, $Z(\theta)$, $ZZ(\theta)$, $CNOT$ and $SWAP$ are Pauli-$Z$ compatible.
    The Toffoli gate is not Pauli-$Z$ compatible.
\end{proposition}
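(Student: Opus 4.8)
The plan is to verify the three defining conditions (P1)--(P3) of Pauli-$Z$ compatibility directly, taking throughout $\ops V$ to be the group of all Pauli-$Z$ strings, $\ops N$ the convex combinations of Pauli-$Z$ string channels, and $\ops U$ the union of $\ops V$ with the gate(s) under consideration. Condition (P1) is immediate since $\ops V$ is a subgroup of $\ops U$ by construction, and (P3) is immediate because Pauli-$Z$ strings are diagonal in the computational basis and diagonal operators commute. So the entire content of the positive part is condition (P2): conjugating an arbitrary Pauli-$Z$ string by each listed gate must again be a Pauli-$Z$ string, up to a global phase that becomes irrelevant once we pass to channels. Since conjugation by a fixed unitary is a group homomorphism and $\ops V$ is generated by the single-qubit operators $Z_i$, it suffices to track the image of each $Z_i$ (the case $\pop U\in\ops V$ being trivial, as $\ops V$ is abelian).

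For $CZ$, $Z(\theta)$ and $ZZ(\theta)$ there is nothing to check: these are diagonal, hence commute with every $Z_i$, so one takes $\pop V'=\pop V$ in (P2). For $X$ acting on qubit $j$ one has $XZ_jX=-Z_j$ and $XZ_kX=Z_k$ for $k\neq j$, so every Pauli-$Z$ string maps to $\pm$ itself. For $CNOT$ with control $c$ and target $t$ the standard conjugation rules give $Z_c\mapsto Z_c$, $Z_t\mapsto Z_cZ_t$, and $Z_k\mapsto Z_k$ otherwise, while $SWAP$ on qubits $i,j$ merely exchanges $Z_i\leftrightarrow Z_j$. In each case a product of single-qubit $Z$'s is sent to a product of single-qubit $Z$'s, i.e. to an element of $\ops V$, which establishes (P2), and hence Pauli-$Z$ compatibility, for each of the six gates (and for their union taken as a single gate set $\ops U$, extending the straightforward calculation already noted for $\{Z(\theta),ZZ(\theta),CNOT\}$).

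For the Toffoli gate $CCX$ with controls $a,b$ and target $c$, I would exhibit a single conjugation that already breaks (P2). On the computational basis of the register $(a,b,c)$, $CCX$ merely swaps $\ket{110}$ and $\ket{111}$, so $CCX\,Z_c\,CCX$ is the diagonal operator with entry $(-1)^{c\oplus(a\wedge b)}=(-1)^c(-1)^{ab}$ on $\ket{abc}$; that is, $CCX\,Z_c\,CCX=Z_c\cdot CZ_{ab}$ (up to phase). The crucial observation is that the diagonal of any Pauli-$Z$ string has the form $\pm(-1)^{\ell(a,b,c,\dots)}$ with $\ell$ affine over $\mathbb{F}_2$, whereas $c\oplus ab$ contains the genuine quadratic monomial $ab$ and so is not affine; this cannot be repaired by a global phase. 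Hence no $\pop V'\in\ops V$ satisfies $\pop{CCX}\circ\pop Z_c=\pop V'\circ\pop{CCX}$, so (P2) fails. The only genuinely delicate point, and the one I would state carefully, is the scope of this negative claim: one must rule out \emph{every} Pauli-$Z$ string as a candidate for $\pop V'$, not just the naive one, and the affine-versus-quadratic dichotomy is exactly what provides this. Everything else is routine bookkeeping with Clifford conjugation rules and with diagonality.
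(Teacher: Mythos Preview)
Your proof is correct and follows essentially the same approach as the paper's: both verify (P2) by tracking how each single-qubit $Z_i$ commutes through the listed gates, with the diagonal gates trivial, $CNOT$ handled by the standard propagation rules, and $SWAP$ reduced to a permutation (the paper uses the $3$-$CNOT$ decomposition instead). Your negative argument for the Toffoli is in fact more complete than the paper's, which exhibits $CCX\,Z_k\,CCX = CZ\cdot Z_k$ but does not spell out, as you do via the affine-versus-quadratic dichotomy, why no Pauli-$Z$ string can serve as $\pop V'$.
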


\begin{proof}
The unitaries $X_{[i]}$, $CZ_{(i,j)}$ (with target qubit $j$), the Z-axis rotation unitaries $Z(\theta)$ and $ZZ(\theta)$, and any unitary $Z_{[A]}$ (for $A \subseteq \{1,\ldots,n\}$), are bias-preserving unitary gates $U$ such that $Z_iU=UZ_i$ for every index $i$, and therefore they are Pauli-$Z$ compatible. 

Additionally, the $CNOT_{(i,j)}$ gate (where we take $j$ to be the target qubit) is such that 
\begin{align}
    Z_{[i]} CNOT_{(i,j)} &= CNOT_{(i,j)} Z_{[i]},\\
    \text{and }
    Z_{[i,j]} CNOT_{(i,j)} &= CNOT_{(i,j)} Z_{[j]},
\end{align}
as shown in Figure~\ref{tab:cnot-propagation}. Therefore, both the $CNOT$ gate and the $SWAP$ gate (when decomposed as 3 $CNOT$ gates) are Pauli-$Z$ compatible. 

\begin{figure}[ht!]
    \centering
    \begin{tabular}{cc}
        \includegraphics[width=0.3\linewidth]{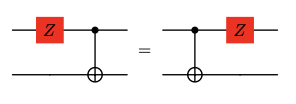} 
      & \includegraphics[width=0.3\linewidth]{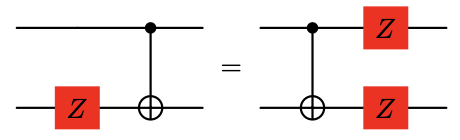} \\
      (a) & (b)
    \end{tabular}
    \caption{Phase-flip propagation through the $CNOT$ gate; the equality is between the unitaries associated to the circuits. In (a) the phase-flip commutes as it affects the control of the $CNOT$ while in (b) it propagates from the target to the control.}
    \label{tab:cnot-propagation}
\end{figure}
However, this is not the case for the Toffoli gate $CCX_{[i,j,k]}$ (with target qubit $k$), which is such that
\begin{align}
    Z_{[i]} CCX_{(i,j,k)} &= CCX_{[i,j,k]} Z_{[i]},\\
    Z_{[j]} CCX_{(i,j,k)} &= CCX_{[i,j,k]} Z_{[j]},\\
    \text{and }
    CZ_{(i,j)} Z_{[k]} CCX_{(i,j,k)} &= CCX_{(i,j,k)} Z_{[k]},
\end{align}
and is therefore not Pauli-$Z$ compatible.
\end{proof}

Hence, for circuits with Toffoli gates, it will be necessary to resort to Hybrid Block-PEC described earlier. Nonetheless, the gates listed in Proposition~\ref{prop:bp-pauli-z} form a generating set for a rather large class of (sub)-circuits on which one can implement Block-PEC.

\paragraph{Analytic performance guarantees.}

Additionally, we studied analytically all combinations of two bias-preserving gates acting on at most 2 qubits, subjected to uncorrelated dephasing noise.

We found out that the per-block method leads to an improvement when a $CNOT$ follows a gate which acts non-trivially on the $CNOT$'s target qubit. In such case, the block would be limited to the single $CNOT$ and the additional $Z$-controls of the preceeding gate.

\begin{proposition}
    Assuming a non-zero 1-qubit phase flip error, Block-PEC yields a sampling cost reduction over PEC, on Pauli-$Z$ compatible circuits with any sub-circuit of the form:
    \begin{enumerate}[label=(\alph*)]
    \item\label{it:block-pec-pattern1} $U_{(a)}=CNOT_{j,k}U_k$;
    \item\label{it:block-pec-pattern2} $U_{(b)}=CNOT_{j,k}U_{j,k}$;
    \item\label{it:block-pec-pattern3} $U_{(c)}=CNOT_{i,j}U_{j,k}$ (with $i \neq k$).
\end{enumerate}
\end{proposition}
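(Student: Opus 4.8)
The plan is to reduce the statement to the remark already made in the paper right after Equation~\ref{eq:block-pec-gamma}, namely that $\gamma_{\text{std}}\ge\gamma_{\text{blk}}$ by the triangle inequality, and to pin down when the inequality is \emph{strict}. Comparing Equations~\ref{eq:gamma_std} and~\ref{eq:block-pec-gamma} term by term, strictness --- hence $\gamma_{\text{blk}}<\gamma_{\text{std}}$ --- follows as soon as there is a single commuted control $\pop V'\in\ops V$ receiving two contributions $\alpha_{\pop U_1}(\pop V_1)\cdots\alpha_{\pop U_d}(\pop V_d)$ of \emph{opposite sign} from two distinct tuples $(\pop V_1,\dots,\pop V_d)$ with $\pop V'_d\circ\cdots\circ\pop V'_1=\pop V'$. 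So for each pattern it suffices to exhibit one such mixed-sign collision, after which multiplicativity of $\gamma$ over blocks carries the strict gain from the two-layer sub-circuit to the whole Pauli-$Z$ compatible circuit. Throughout I use the uncorrelated dephasing model $\pop F_p^{[A]}$: by Theorem~\ref{thm:invert} applied qubit by qubit, the single-gate PEC coefficients depend only on the support $A$ of the gate (not on which bias-preserving gate it is) and equal $\alpha(\pop Z_{[B]})=(-p)^{|B|}(1+p)^{|A|-|B|}$, so $\alpha$ is positive on the identity string, negative on weight-one strings, positive on weight-two strings.

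For pattern~\ref{it:block-pec-pattern1}, $\pop U=\pop{CNOT}_{j,k}\circ\pop U_k$ with $\pop U_k$ nontrivial on the target $k$: the first-layer control $\pop V_1=\pop Z_k$ (coefficient $<0$) is commuted to the end, passing through $\pop{CNOT}_{j,k}$ by (P2) --- where, by Proposition~\ref{prop:bp-pauli-z} and Figure~\ref{tab:cnot-propagation}, a $Z$ on the target spreads to $\pop Z_{[j,k]}$ --- and through the dephasing channels by (P3). Together with a trivial second-layer control this yields $\pop V'=\pop Z_{[j,k]}$ with a negative contribution. The tuple with trivial first-layer control and second-layer control $\pop Z_{[j,k]}$ (weight-two, coefficient $>0$) also yields $\pop V'=\pop Z_{[j,k]}$, with a positive contribution. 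As $p\neq0$ both are nonzero and of opposite sign, so $\gamma_{\text{blk}}<\gamma_{\text{std}}$; enumerating the eight tuples moreover gives $\gamma_{\text{blk}}=1+6p+6p^2$ versus $\gamma_{\text{std}}=(1+2p)^3$, a $\Theta(p^2)$ reduction, which I would include as an illustration.

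Patterns~\ref{it:block-pec-pattern2} and~\ref{it:block-pec-pattern3} use the same mechanism with a different $Z$-string of the first gate doing the spreading. For $\pop U=\pop{CNOT}_{j,k}\circ\pop U_{j,k}$, the control $\pop Z_k$ of $\pop U_{j,k}$ (weight-one, negative) propagates through $\pop{CNOT}_{j,k}$ to $\pop Z_{[j,k]}$ and collides with the second layer's own $\pop Z_{[j,k]}$ control (positive). For $\pop U=\pop{CNOT}_{i,j}\circ\pop U_{j,k}$ with $i\neq k$, the control $\pop Z_j$ of $\pop U_{j,k}$ sits on the target of $\pop{CNOT}_{i,j}$ and hence propagates to $\pop Z_{[i,j]}$, colliding with the second layer's $\pop Z_{[i,j]}$ control. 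In both cases a negative and a positive contribution meet at the same $\pop V'$, forcing strict inequality; I would likewise tabulate the coefficients to extract the explicit $\Theta(p^2)$ gain.

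The parts that actually need care, rather than the collision itself, are: (i) robustness over gates --- the argument must hold for \emph{any} bias-preserving $\pop U_k$ or $\pop U_{j,k}$ acting nontrivially on the relevant qubit, which works precisely because in the $\pop F_p^{[A]}$ model the inverse-noise coefficients are gate-independent and only the CNOT propagation rule is invoked when commuting $\pop V_1$ to the end (this also handles the case where $\pop U_k$ has a second qubit in its support); (ii) strictness for \emph{every} admissible $p>0$ rather than just asymptotically, which follows from the exact Taylor-truncated coefficients, e.g. $\gamma_{\text{std}}-\gamma_{\text{blk}}=6p^2+8p^3>0$ in pattern~\ref{it:block-pec-pattern1}; and (iii) consistency with the convention fixed after Theorem~\ref{thm:invert} that the $O(p^2)$ noise carried by the added controls is neglected --- one should note it does not cancel the $\Theta(p^2)$ advantage, so the comparison of the idealized $\gamma_{\text{std}}$ and $\gamma_{\text{blk}}$ is the meaningful one. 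The full coefficient bookkeeping for~\ref{it:block-pec-pattern2}--\ref{it:block-pec-pattern3} is the most tedious step, but routine given the propagation rules.
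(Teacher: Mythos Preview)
Your approach is correct and takes a somewhat different route from the paper. The paper proceeds by brute-force explicit computation: using the \emph{exact} inverse coefficients $\alpha_0=(1-p)/(1-2p)$, $\alpha_1=-p/(1-2p)$ from Appendix~\ref{appendix:dephasing}, it aggregates all tuples for each pattern, obtains closed-form rational functions such as $\gamma_{\text{blk}}(U_{(a)})=(1+2p-2p^2)/(1-2p)^2$ versus $\gamma_{\text{std}}(U_{(a)})=1/(1-2p)^3$, and checks the inequality directly. You instead isolate the mechanism behind the gain: strictness in the triangle inequality between Equations~\ref{eq:gamma_std} and~\ref{eq:block-pec-gamma} is forced by a single mixed-sign collision at some $\pop V'$, and you then exhibit that collision (always at the weight-two string produced when a $Z$ on the CNOT target spreads). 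This is more conceptual and makes transparent \emph{why} the CNOT-target structure is what matters, whereas the paper's computation yields the precise ratios plotted in Figure~\ref{tab:pattern-analysis}.

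One point to tighten: you invoke the first-order truncation $\alpha(\pop Z_{[B]})=(-p)^{|B|}(1+p)^{|A|-|B|}$ from Theorem~\ref{thm:invert}, but the paper's proof and its stated sampling costs use the exact inverse. This does not threaten your collision argument --- only the \emph{signs} of the layer coefficients enter there, and those agree (positive on even weight, negative on odd weight) for both the exact and truncated inverses throughout $p\in(0,\tfrac12)$. But your illustrative numbers $\gamma_{\text{blk}}=1+6p+6p^2$ and $\gamma_{\text{std}}=(1+2p)^3$ are approximations, not the actual PEC quantities; your point (iii) conflates two different $O(p^2)$ effects (noise on the added controls versus truncation of $\pop N^{-1}$). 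If you want figures that match the paper, redo the bookkeeping with $\alpha_0,\alpha_1$ as above --- the collision structure and hence the strict inequality are unchanged.
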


\begin{proof}
Define $p$ to be the $1$-qubit phase flip error, and assume that $p > 0$. Let us write $\gamma_{\text{std}}(U)$ (resp. $\gamma_{\text{blk}}(U)$) for the sampling cost associated with the PEC (resp. Block-PEC) protocol for the circuit $U$. We represent these circuit patterns with their error locations in Figure~\ref{tab:circuit-patterns}. We refer to Appendix~\ref{appendix:dephasing} for a definition of dephasing noise channels as linear combinations of Pauli-$Z$ channels, and to Appendix~\ref{appendix:detailed-performance-analysis} for the details of the calculations to follow.

\begin{figure}[ht!]
    \centering
    \begin{tabular}{ccc}
      \includegraphics[width=0.3\linewidth]{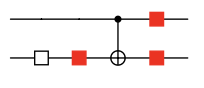}
      & \includegraphics[width=0.3\linewidth]{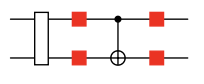}
      & \includegraphics[width=0.3\linewidth]{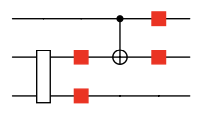}\\
        (a) & (b) & (c)
    \end{tabular}
    \caption{Circuit $U_{(x)}$ with error locations, for $x \in \{a,b,c\}$.}
    \label{tab:circuit-patterns}
\end{figure}

For the circuit $U_{(a)} =CNOT_{j,k} U_k$, one can compute that 
\begin{align}
    \gamma_{\text{blk}}(U_{(a)}) = |\beta_0| + 3 |\beta_1| = \beta_0 - 3 \beta_1,
\end{align}
with $\beta_0 = ((1-p)^3 - p^3) / (1-2p)^3$ and $\beta_1 = (p^2(1-p) - p(1-p)^2) / (1-2p)^3$. Then, 
\begin{align}
    \gamma_{\text{blk}}(U_{(a)}) = \frac{1+2p-2p^2}{(1-2p)^2} < \frac{1}{(1-2p)^3} = \gamma_{\text{std}}(U_{(a)}).
\end{align}
Similarly, one can compute that
\begin{align}
    \gamma_{\text{blk}}(U_{(b)}) &= \frac{1+2p-6p^2+4p^3}{(1-2p)^3}\\ 
    &< \frac{1}{(1-2p)^4} = \gamma_{\text{std}}(U_{(b)}),\\
    \text{and } \gamma_{\text{blk}}(U_{(c)}) &= \frac{1+2p-2p^2}{(1-2p)^3}\\
    &< \frac{1}{(1-2p)^4} = \gamma_{\text{std}}(U_{(c)}).
\end{align}
\end{proof}

This analysis can now be used to understand the possible gains for generic circuits that can be decomposed into sequences of sub-circuits with Pauli-$Z$ compatible ones, interspersed with non Pauli-$Z$ compatible gates. Indeed, whenever the sub-circuits of the type $U_{(a)}, U_{(b)}, U_{(c)}$ appear within a larger circuit, Block-PEC yields an improvement over PEC (see Figure~\ref{tab:pattern-analysis}). The frequency of the appearance of such sub-circuits determines the efficiency of Block-PEC. More concretely, if sub-circuits of type $U_{(a)}, U_{(b)}, U_{(c)}$ appear $k$ times throughout the circuit, the sampling cost decreases by a factor proportional to $kp^2$.

\begin{figure}
    \centering
    \begin{tabular}{c}
        \includegraphics[width=0.45\linewidth]{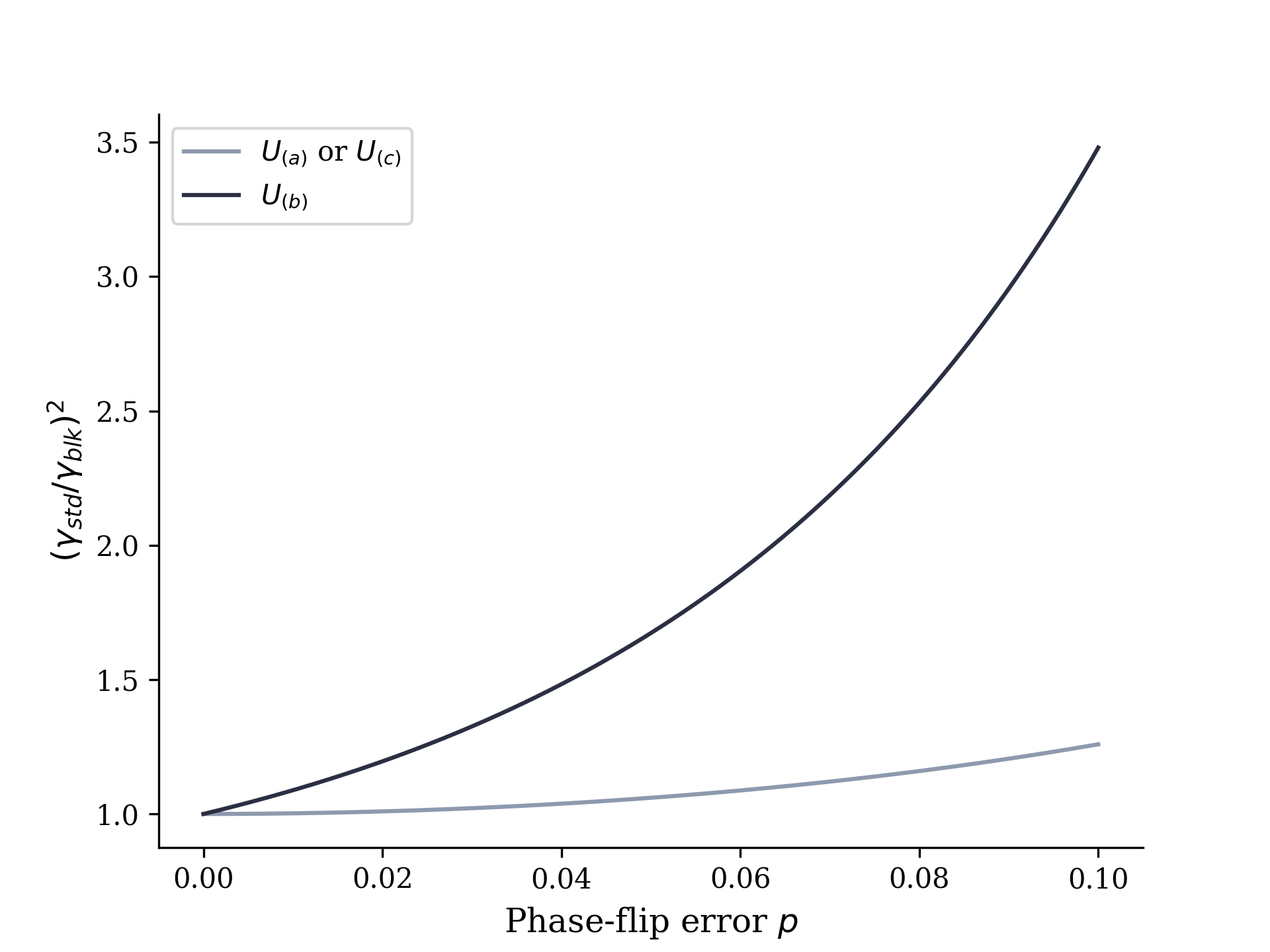}
    \end{tabular}
    \caption{Sampling cost gain  $(\gamma_{\text{std}} / \gamma_{\text{blk}})^2$ for $U_{(x)}$ for $x \in \{a,b,c\}$, as a function of the phase-flip error $p$.}
    \label{tab:pattern-analysis}
\end{figure}

\paragraph{Performance evaluation for random circuits.}

To analyze the performance of Block-PEC for the gate set $\{X, Z, CNOT, Z(\theta), ZZ(\theta), CZ\}$, we have constructed random circuits acting on $n$ qubits and with a maximum depth $n+1$. More precisely, we have sampled uniformly at random $n+1$ labels from the set $\{X, Z, CNOT, Z(\theta), ZZ(\theta), CZ\}$. For each label we have then randomly chosen the qubit(s) the corresponding gate has been acting upon. For $ZZ(\theta)$ gates, each angle $\theta$ has been chosen uniformly at random in $[0,2\pi)$.

We then numerically evaluated the sampling cost gain defined as the ratio $(\gamma_{\text{std}} / \gamma_{\text{blk}})^2$. It shows that for uncorrelated dephasing noise with strength $p=0.1$ and $n=8$, the sampling overhead of PEC is more than twelve times that of Block-PEC, as shown in Figure~\ref{tab:bp-sampling-cost} (left). As expected, whenever the strength of the noise decreases, the difference between the two methods vanishes. This can be seen for $p=0.001$ in Figure~\ref{tab:bp-sampling-cost} (right). Indeed, for low noise strength, the probability of having more than one error per circuit becomes small enough so that the noise is well approximated by a process that would affect a single location in the whole block. This effectively means that sampling per gate, per layer or per block will be equivalent up to negligible factors and hence yield a similar sampling overhead. 

\begin{figure}
    \centering
    \begin{tabular}{cc}
      \includegraphics[width=0.45\linewidth]{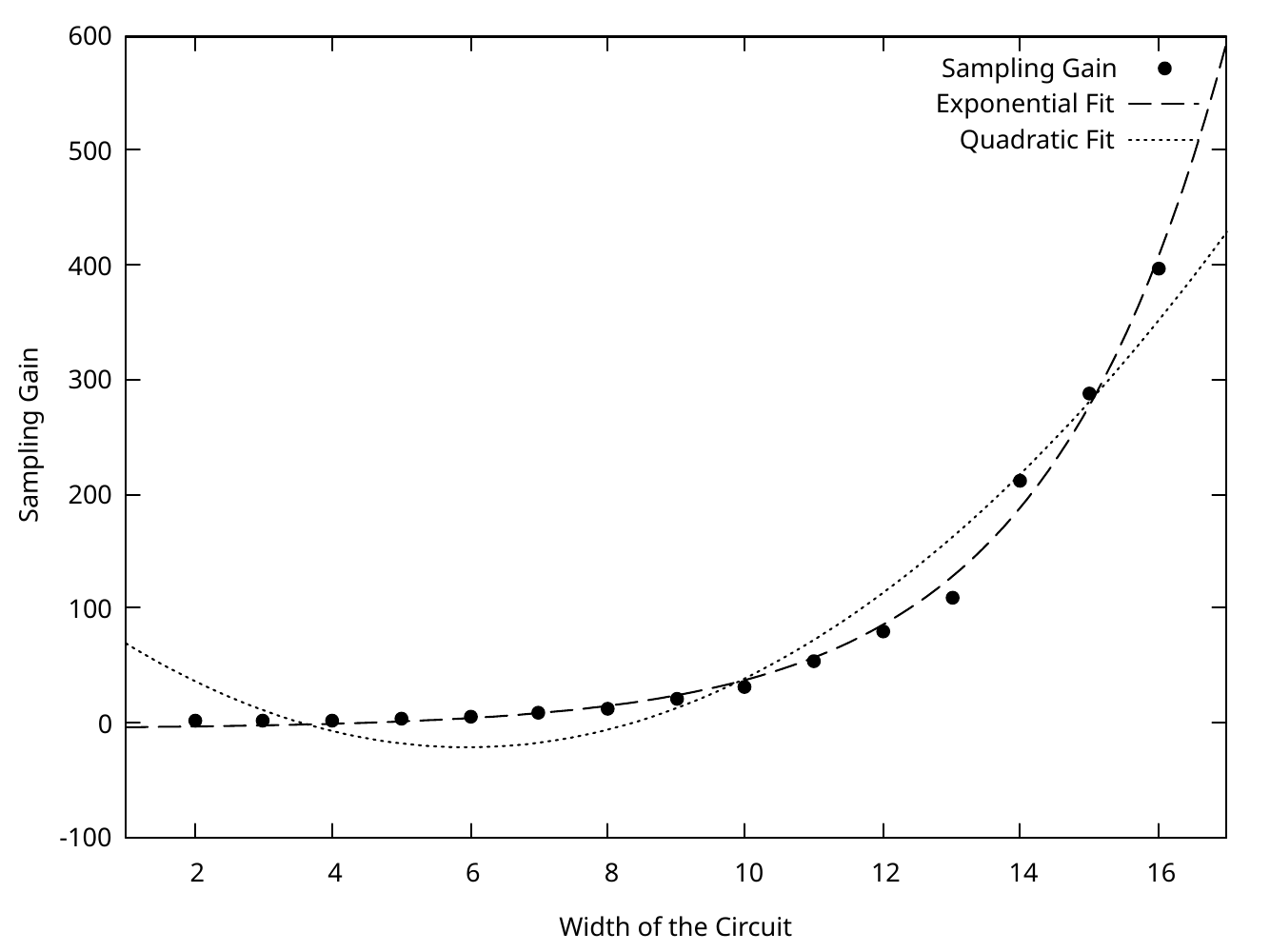} &
      \includegraphics[width=0.45\linewidth]{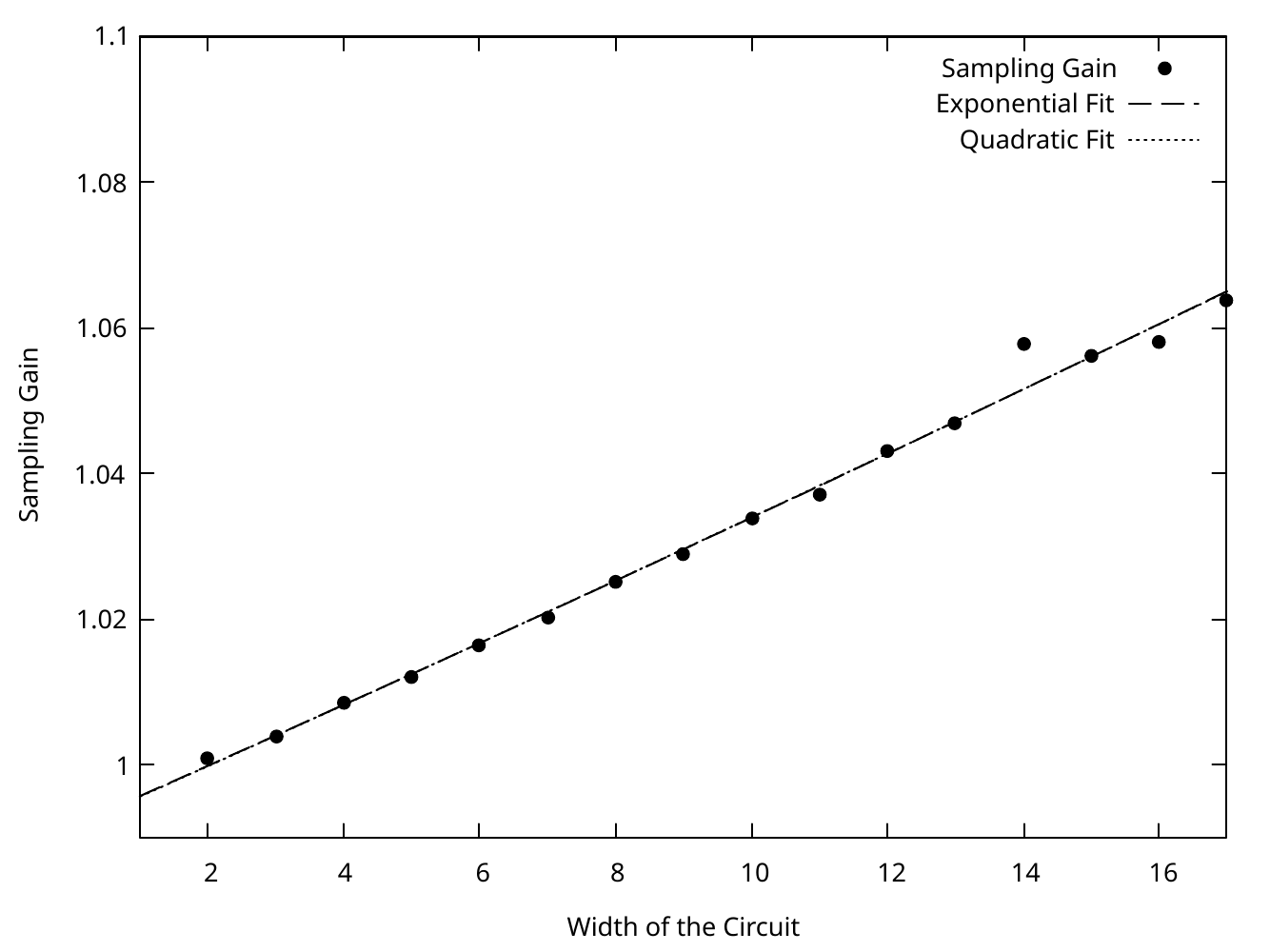}
    \end{tabular}
    \caption{Performance of Block-PEC on randomly generated bias-preserving circuits, for a phasing noise with error probability $p=0.1$ (left) and $p=0.001$ (right). The vertical axis shows the sampling cost gain $(\gamma_{\text{std}} / \gamma_{\text{blk}})^2$ as a function of the qubit count $n$. The depth of the circuit is $n+1$ making them relatively shallow and requiring higher error probabilities to exhibit a sizable gain when using Block-PEC.}
    \label{tab:bp-sampling-cost}
\end{figure}

\paragraph{Performance evaluation for $SWAP$ networks with bias-preserving gates.}

An important class of circuits for superconducting devices in the NISQ regime is that of $SWAP$ networks~\cite{OHRW19generalized}. These circuits allow to optimally swap qubit states so that any two qubit on a line become adjacent during one execution of the network. This then allows to perform a gate between them before putting them apart and allowing for other interactions. In effect, this solves the problem of long range interaction on architectures that have spatially fixed registers, though possibly requiring to repeat the $SWAP$ network to take into account non commuting interaction gates.

Now, when the $SWAP$ gates are compiled into 3 $CNOT$s, and if the gates performing the interactions are bias-preserving, the whole circuit is indeed bias-preserving as well. The depth and nature of the gates with many instances of pattern $U_{(b)}$ (Fig.~\ref{tab:circuit-patterns}-b) hint at the higer interest of implementing Block-PEC for these circuits.

This is demonstrated in Figure~\ref{fig:rzz-swap} where we have considered $SWAP$ networks of increasing width. Their depth was arbitrarily set to one times their width (left) and 3 times their width (right). The interactions before each swap are $ZZ(\theta)$ 2-qubit rotation gates with random $\theta$. There the exponential nature of the gain appears clearly and is confirmed by the total residual squarred error that it smaller for the exponential fit by a factor up to 6.12 when compared to the quadratic fit.

\begin{figure}
  \centering
  \begin{tabular}{cc}
    \includegraphics[width=0.45\linewidth]{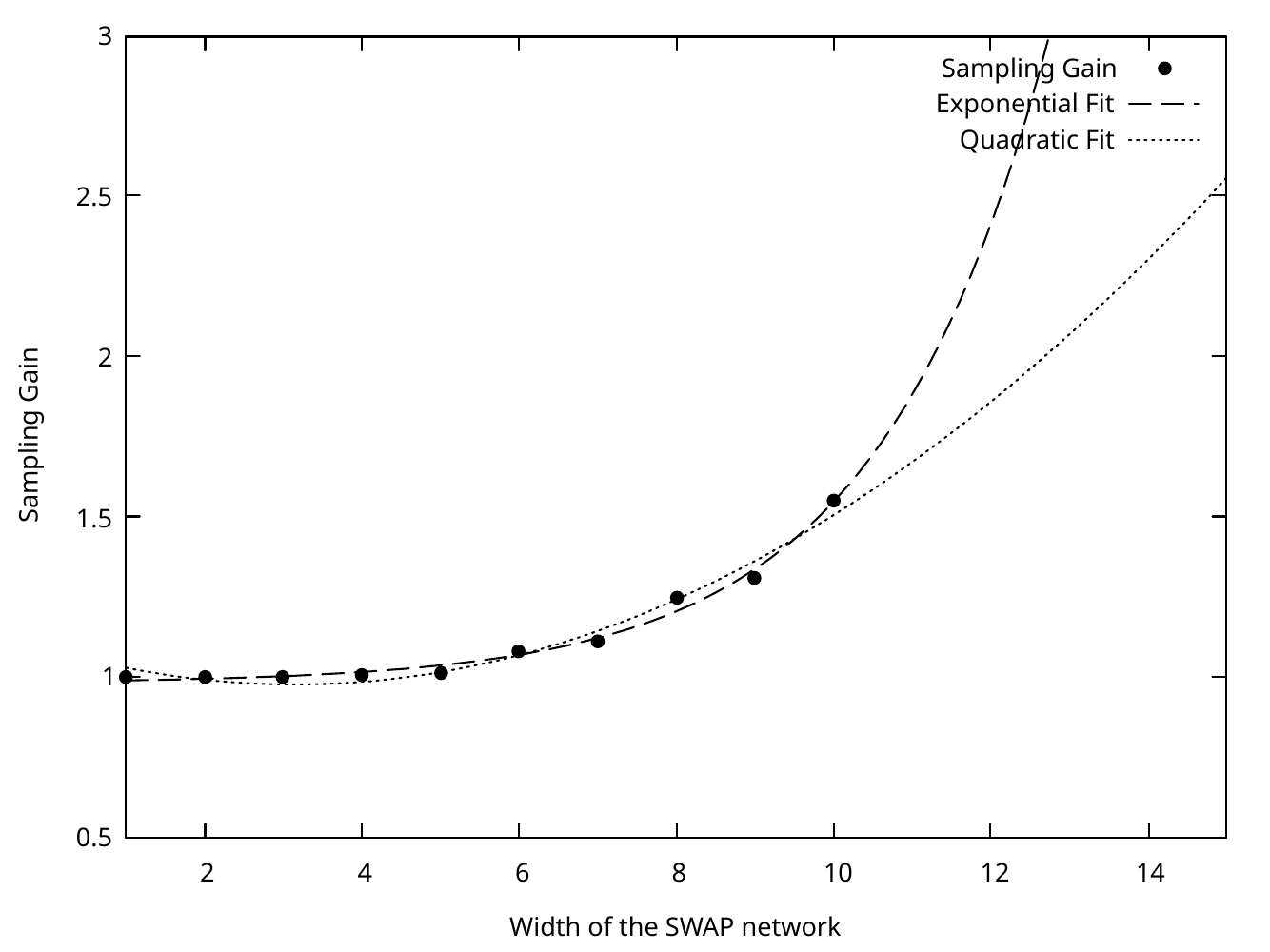}
    & \includegraphics[width=0.45\linewidth]{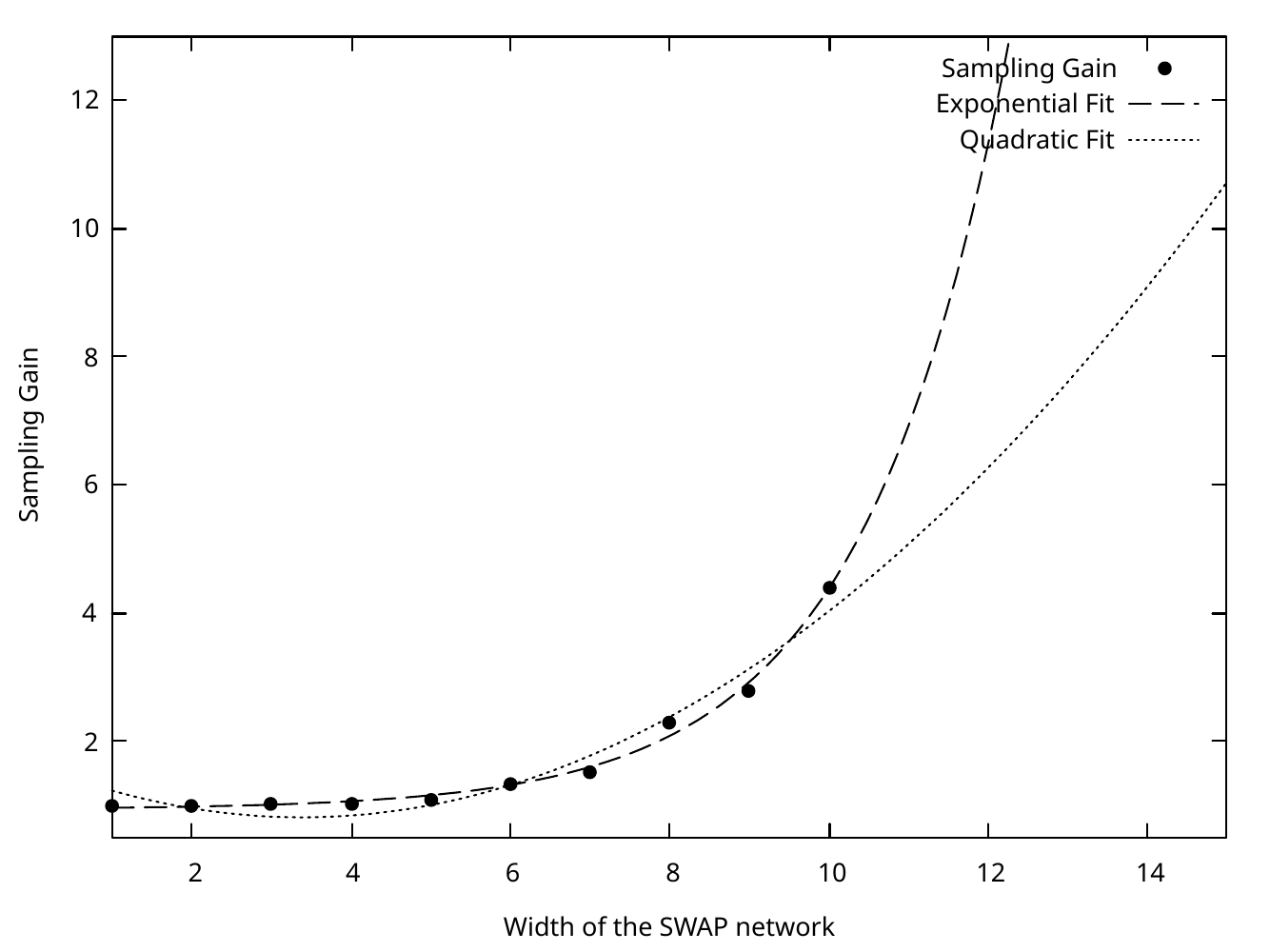}
  \end{tabular}
  \caption[]{Sampling gain for $SWAP$ networks and $ZZ(\theta)$ gates. The depth is equal to the width (left) or 3 times the width (right). The error model is independent uncorrelated single qubit dephasing noise for each wire touched by a gate. The dephasing probability is 0.001. The deeper the circuit, the more advantageous it is to perform Block-PEC, with already a factor 4 for 9 qubits only in the deeper circuit. The total squarred error for the exponential fit are 0.0035 (left) and 0.0756 (right) while it is 0.0083 (left) and 0.4623 (right) for the quadratic one. This entails the exponential fit a respectively 2.38 and 6.12 better precision than the quadratic one.}
  \label{fig:rzz-swap}
\end{figure}

\subsection{Partially bias-preserving circuits}
\label{sub:partial}

\paragraph{Pauli-$Z$  compatibility.}

As the definition of bias-preserving gates is quite restrictive, we have striven to identify gates which preserve phase flip biases but only on a subspace of the space of all quantum states. It is in this context that we studied parity-preserving gates. These gates are parameterized $2$-qubit gates which perform a rotation in the subspace defined by $\text{span}\left\{\ket{01}, \ket{10}\right\}$, while acting as the identity on the rest. We define what it means for such a gate to be partially bias-preserving (see Appendix~\ref{app:partially-bias-preserving} for a complete mathematical characterization).

\begin{definition}
Consider $\mathcal S$ to be a subspace of the Hilbert space of $n$ quantum states.
A gate is $\mathcal S$-bias-preserving if its unitary matrix representation $G$ is:
\begin{enumerate}
    \item stable under $\mathcal S$, i.e. $\ket \phi \in \mathcal S \implies G\ket \phi \in \mathcal S$;
    \item such that for every $n$-by-$n$ unitary diagonal matrix $D$, there is a $n$-by-$n$ unitary diagonal matrix $D'$ such that, for every state $\ket\phi \in \mathcal S$, $GD\ket\phi=D'G\ket{\phi}$.
\end{enumerate}
\end{definition}

RBS gates are $2$-qubit rotation gates acting on $\ket{01}$ and $\ket{10}$ as
\begin{align}
RBS(\theta)\ket{01} &= \cos(\theta)\ket{01} - \sin(\theta)\ket{10}\\
RBS(\theta)\ket{10} &= \cos(\theta)\ket{10} + \sin(\theta)\ket{01}
\end{align}
while acting as the identity on $\ket{00}$ and $\ket{11}$.

We use parity-preserving gates, and in particular $RBS$ gates, as examples of partially bias-preserving gates. Other examples of such partially bias-preserving parity-preserving gates are the $\text{A}(\theta,\phi)$ gate (see e.g.~\cite{Barkoutsos_2018,Gard_2020,vpe-vff}), the $
\text{Givens}(\theta)$ gate (see e.g.~\cite{Anselmetti_2021,Kivlichan_2018,arute2020}), the $\text{XY}(\theta)$ gate (see e.g.~\cite{Abrams_2020}).

\begin{theorem}
    $RBS$ gates are partially bias-preserving gates.
\end{theorem}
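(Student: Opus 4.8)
The claim is that $RBS(\theta)$ is a partially bias-preserving gate in the sense of the last definition, for the subspace $\mathcal S = \setspan\{\ket{01},\ket{10}\}$ (on the two qubits the gate acts on). I would verify the two conditions of the definition in turn. The first condition, stability, is immediate from the defining action of $RBS(\theta)$: by the displayed formulas, $RBS(\theta)\ket{01}$ and $RBS(\theta)\ket{10}$ are both linear combinations of $\ket{01}$ and $\ket{10}$, hence any $\ket\phi = a\ket{01}+b\ket{10} \in \mathcal S$ is mapped into $\mathcal S$. (I would also note $RBS(\theta)$ fixes $\ket{00},\ket{11}$, so $\mathcal S$ is exactly the nontrivial eigen-block.)

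The substance is the second condition: for every unitary diagonal $D$ there must exist a unitary diagonal $D'$ with $GD\ket\phi = D'G\ket\phi$ for all $\ket\phi\in\mathcal S$. Here $G = RBS(\theta)$ acting on the two relevant qubits (tensored with identity elsewhere), and since the relevant action is entirely inside the two-dimensional span of $\ket{01},\ket{10}$, it suffices to work with the $2\times 2$ block. Write the restriction of a diagonal $D$ to $\{\ket{01},\ket{10}\}$ as $\mathrm{diag}(e^{i\alpha}, e^{i\beta})$. The key observation is that a Pauli-$Z$ on either qubit, restricted to this block, acts as $\pm 1$ — specifically $Z_{\text{(first)}}$ restricted to $\{\ket{01},\ket{10}\}$ is $\mathrm{diag}(-1,1)$ up to an overall sign, and likewise for the second qubit, so a general phase on this block is generated by $Z\otimes I$ and $I\otimes Z$ phases. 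I would then compute $RBS(\theta)\,\mathrm{diag}(e^{i\alpha},e^{i\beta})$ as a $2\times 2$ matrix and check that it equals $D'\,RBS(\theta)$ for a suitable diagonal $D'$. Because $RBS(\theta)$ in this block is a real rotation matrix $\begin{pmatrix}\cos\theta & \sin\theta\\ -\sin\theta & \cos\theta\end{pmatrix}$, conjugating a diagonal phase through it does not in general stay diagonal — \emph{unless} one uses the restriction to the relevant phases. The clean way is: the phases on $\mathcal S$ that actually arise from diagonal matrices $D$ on the full Hilbert space, when we only demand the intertwining relation to hold on $\mathcal S$, are precisely those for which $D$ restricted to the block commutes appropriately; more carefully, one shows that for the global $D$ the effective phase on the block is of the form $e^{i\gamma}\,\mathrm{diag}(e^{i\delta}, e^{-i\delta})$ and then $D'$ can be taken as the block-phase obtained by swapping the $\ket{01},\ket{10}$ labels (since $RBS$ rotates between them), i.e. $D' = e^{i\gamma}\,\mathrm{diag}(e^{-i\delta}, e^{i\delta})$ on the block, extended by the same phases on $\ket{00},\ket{11}$. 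I would exhibit this $D'$ explicitly and verify $GD = D'G$ entrywise.

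The main obstacle, and the place to be careful, is exactly this last point: a generic $2\times 2$ rotation does \emph{not} intertwine all diagonal unitaries, so the proof must use that $\ket{00}$ and $\ket{11}$ are fixed and that on $\mathcal S$ the gate is a rotation between two specific basis vectors, restricting which diagonal phases are "allowed" to matter. Concretely, I would reduce to checking that for the diagonal $D$ the intertwining holds on $\mathcal S$ iff the phase differences are arranged as above, and appeal to the complete characterization in Appendix~\ref{app:partially-bias-preserving} to handle the bookkeeping of how a diagonal on the full $n$-qubit space restricts to the two-qubit block. Once the $2\times 2$ computation is done and $D'$ is produced, the statement follows, and as a remark I would observe the same argument applies verbatim to the $\text{Givens}(\theta)$, $\text{XY}(\theta)$ and $\text{A}(\theta,\phi)$ gates, since each acts as a rotation on $\setspan\{\ket{01},\ket{10}\}$ and as a (possibly phased) identity on $\ket{00},\ket{11}$.
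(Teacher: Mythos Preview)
Your stability argument is fine; the gap is in the intertwining condition. Your proposed $D'$ does not work: on the block $\{\ket{01},\ket{10}\}$ write $G=\bigl(\begin{smallmatrix}\cos\theta & \sin\theta\\ -\sin\theta & \cos\theta\end{smallmatrix}\bigr)$ and $D = e^{i\gamma}\mathrm{diag}(e^{i\delta},e^{-i\delta})$; the $(1,1)$ entry of $GD$ is $e^{i(\gamma+\delta)}\cos\theta$, while with your $D' = e^{i\gamma}\mathrm{diag}(e^{-i\delta},e^{i\delta})$ the $(1,1)$ entry of $D'G$ is $e^{i(\gamma-\delta)}\cos\theta$, and these disagree for generic $\delta,\theta$. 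The ``constraint'' that the block phase be of the form $e^{i\gamma}\mathrm{diag}(e^{i\delta},e^{-i\delta})$ is no constraint at all---every $2\times 2$ diagonal unitary can be written this way---so nothing is being restricted. Your intuition that $D'$ should swap the phases because $RBS$ ``rotates between'' $\ket{01}$ and $\ket{10}$ conflates rotation with permutation: the phase-swap would be correct for the swap $X_1X_2$, but $RBS(\theta)$ is a genuine rotation, and the obstacle you yourself flag (a $2\times 2$ rotation does not conjugate a non-scalar diagonal to a diagonal) is not sidestepped by this move.

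The paper takes a different route: rather than producing a new $D'$, it decomposes an arbitrary diagonal as $D = Z_1(\mu_1)Z_2(\mu_2)ZZ_{1,2}(\mu_{12})$ and argues, via the identity $RBS(\theta)\ket\phi = \cos\theta\,\ket\phi + \sin\theta\,Z_1X_1X_2\ket\phi$ on $\mathcal S_1$, that each factor \emph{commutes} with $RBS(\theta)$ on $\mathcal S_1$, so that $D' = D$. You should be aware, however, that the key commutation step $RBS(\theta)Z_1\ket\phi = Z_1RBS(\theta)\ket\phi$ runs into the very same obstruction when tested on $\ket\phi=\ket{01}$: the left side is $\cos\theta\,\ket{01}-\sin\theta\,\ket{10}$ while the right side is $\cos\theta\,\ket{01}+\sin\theta\,\ket{10}$. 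So neither your swapped $D'$ nor the paper's choice $D'=D$ survives a direct check against the definition as literally stated; either the definition or the argument needs refinement before the claim can be made rigorous.
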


\begin{proof}
    See Appendix~\ref{app:partially-bias-preserving}.
\end{proof}

The $RBS$ gate can be implemented as the following sequence of three gates
\begin{align}
\label{eq:rbs-implementation}
    RBS(\theta)_{j,k} = CNOT_{j,k}
    XCZ(\theta)_{j,k}
    CNOT_{j,k},
\end{align}
where $XCZ(\theta)$ is the $X$-controlled $Z(\theta)$ rotation gate, represented by the unitary matrix
\begin{align}
    XCZ(\theta) = \frac{1}{2} 
    \begin{pmatrix}
        1 + e^{i\frac{\theta}{2}} & 0 
        & 1 - e^{i\frac{\theta}{2}} & 0\\
        0 & 1 + e^{-i\frac{\theta}{2}}
        & 0 & 1 - e^{-i\frac{\theta}{2}}\\
        1 - e^{i\frac{\theta}{2}} & 0 
        & 1 + e^{i\frac{\theta}{2}} & 0\\
        0 & 1 - e^{-i\frac{\theta}{2}}
        & 0 & 1 + e^{-i\frac{\theta}{2}}\\
    \end{pmatrix}. \nonumber
\end{align}
Each of these gates is Pauli-$Z$ compatible, making the $3$-gate circuit implementing $RBS$ Pauli-$Z$ compatible.

\paragraph{Performance evaluation for $RBS$ gate circuits.}
The previous decomposition of $RBS$ gates contains the pattern $U_{(b)}$ as $XCZ(\theta)_{j,k} CNOT_{j,k}$.  Therefore, as per the analysis of the previous subsection, one can expect non-trivial gains when using Block-PEC on algorithms which sample from $RBS$ gate-based parameterized circuits.

We have numerically evaluated the pyramid circuit represented in Figure~\ref{tab:matchgate-architectures} as it is a useful building block of several quantum machine learning algorithms (See Section~\ref{sec:applications}). Our simulations (Figure~\ref{tab:rbs-numerics}) show that Block-PEC exponentially reduces the number of samples necessary to achieve a given precision for the error mitigated value for circuits as the depth of the circuit increases. For instance, extrapolating our numerical simulations with an exponential fit let us expect a 12 times sampling gain for 30 qubits at $p=0.001$.

\begin{figure}
    \centering
    \begin{tabular}{c}
        \includegraphics[width=0.8\linewidth]{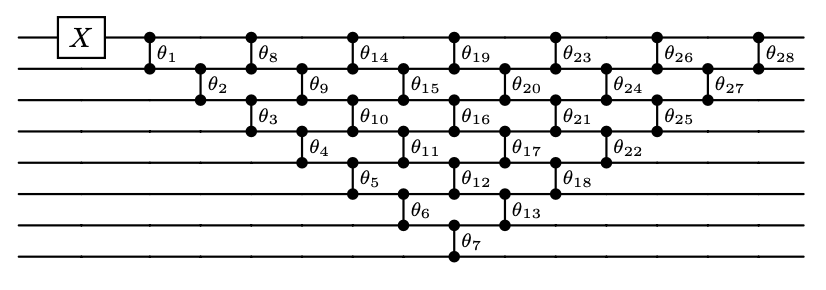}
    \end{tabular}
    \caption{8-qubit pyramidal layering for matchgates.}
    \label{tab:matchgate-architectures}
\end{figure}

\begin{figure}
    \centering
    \includegraphics[width=.5\linewidth]{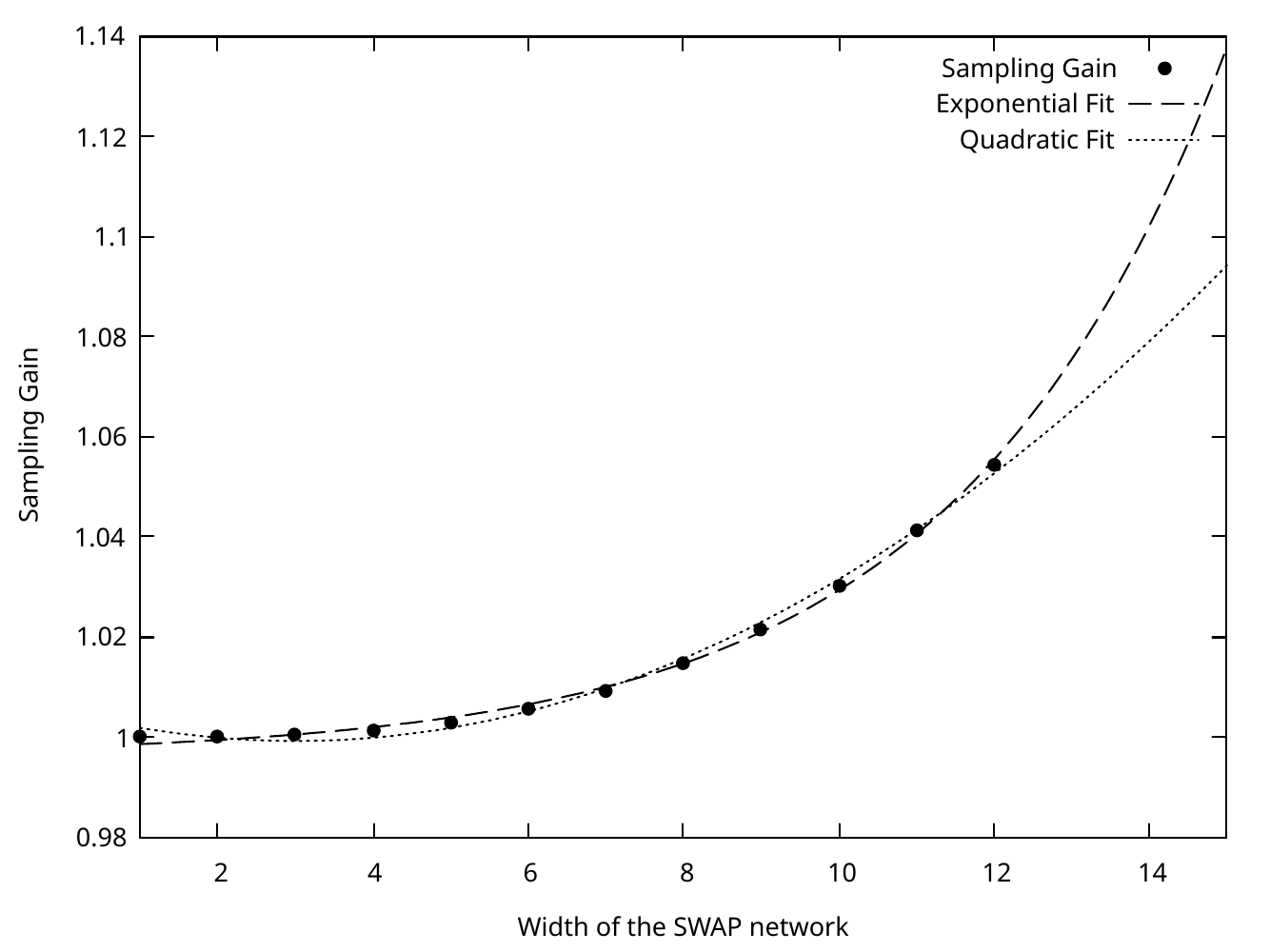}
    \caption{Performance of Block-PEC on pyramids of $RBS$ gates (Figure~\ref{tab:matchgate-architectures}), with a single-qubit phase-flip noise model with error probability $p=0.001$. The exponential fit exhibits a residual squared error of $8.07.10^{-6}$ while the quadratic fit is at $1.646.10^{-5}$. The best fit is $\approx 0.0018 e^{ 0.2905 n} + 0.9962$ where $n$ is the number of qubits in the pyramid circuit. For $n = 30$ the expected sampling gain is about 12 times.}
    \label{tab:rbs-numerics}
\end{figure}

\paragraph{Performance evaluation for $SWAP$ networks with $RBS$ gates.}

As for the case of fully bias-preserving circuits, $SWAP$ networks~\cite{OHRW19generalized} are useful to consider in the case of partially bias-preserving circuits. Here again, when the $SWAP$ gates are compiled into 3 $CNOT$s, with $RBS$ gates for the interaction gates, the whole circuit is partially bias-preserving.

The efficiency of Block-PEC is particularly stricking in Figure~\ref{fig:rbs-swap-net} where we have considered $SWAP$ networks of increasing width. Their depth was arbitrarily set to one times their width (left) and 3 times their width (right). The interactions between the qubits before each swaps are $RBS(\theta)$ gates with random $\theta$. Again the exponential nature of the gain appears clearly and is confirmed by the fact that the exponential fit exhibits a total squarred error smaller than that of the quadratic fit by a factor up to 16.04.

\begin{figure}
  \centering
  \begin{tabular}{cc}
    \includegraphics[width=0.45\linewidth]{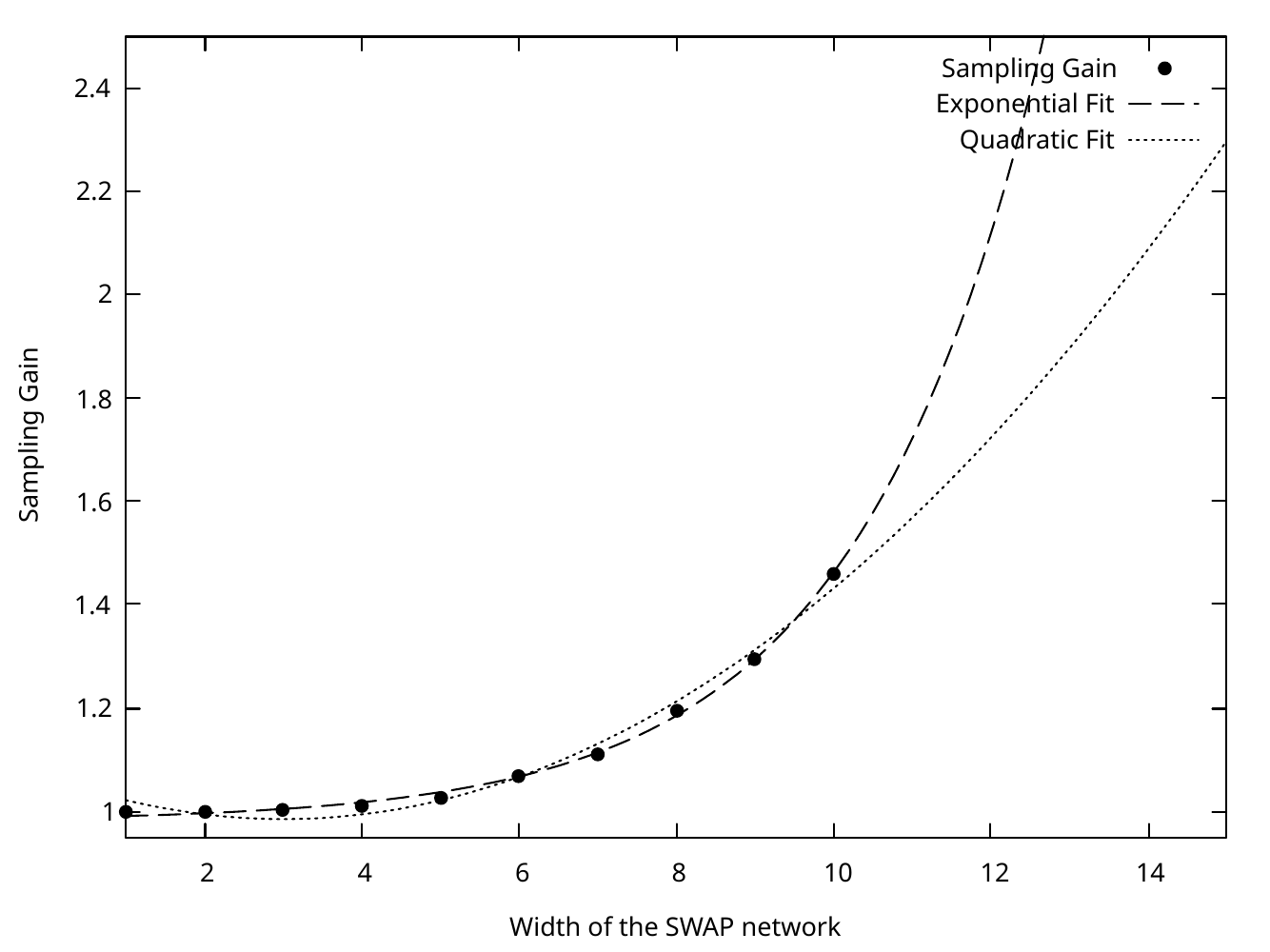}
    & \includegraphics[width=0.45\linewidth]{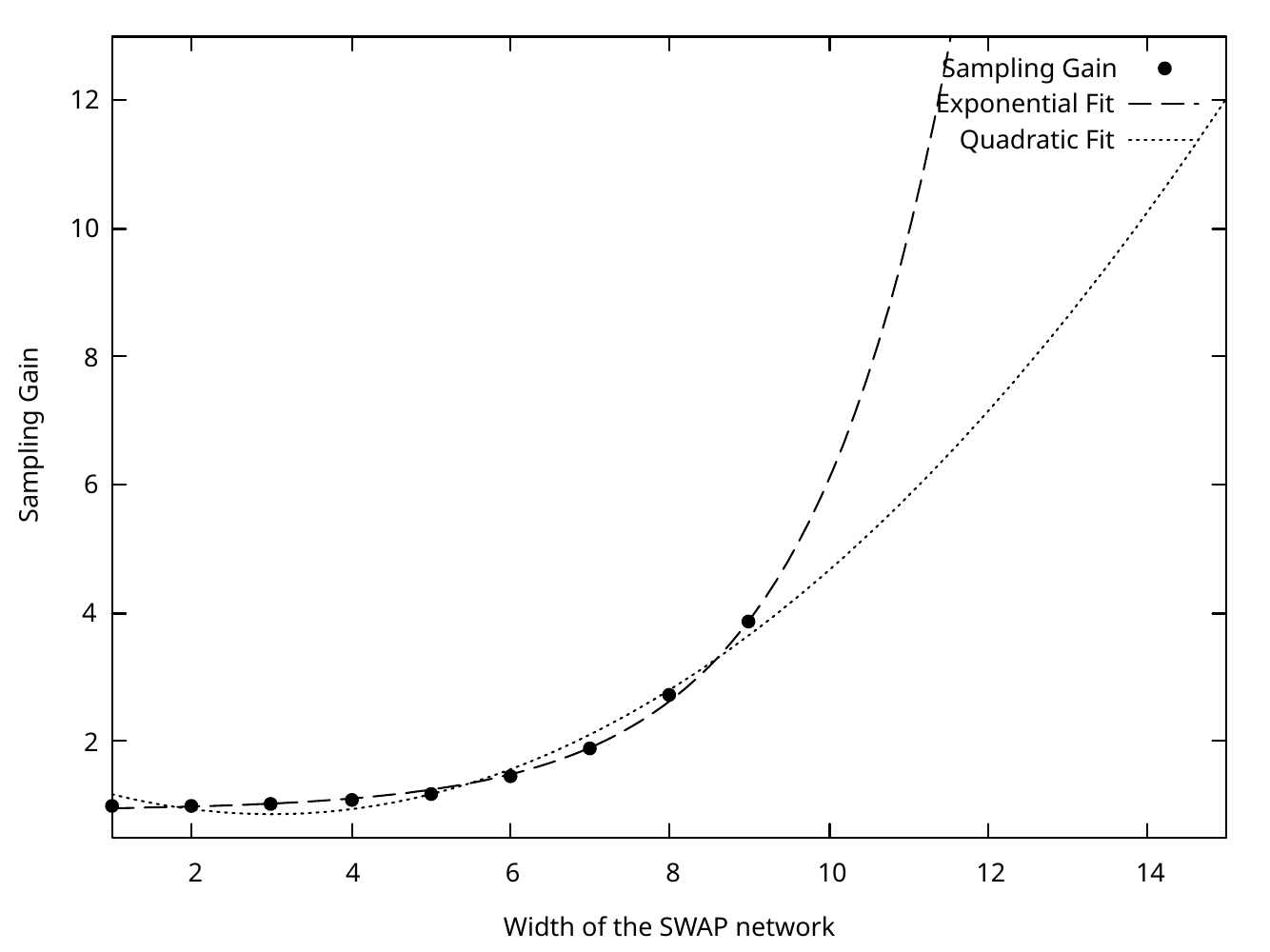}
  \end{tabular}
  \caption[]{Sampling gain for $SWAP$ networks and $RBS$ gates. The depth is equal to the width (left) or 3 times the width (right). The error model is independent uncorrelated single qubit dephasing noise for each wire touched by a gate. The dephasing probability is 0.001. The deeper the circuit, the more advantageous it is to perform Block-PEC, with already a factor 4 for 9 qubits only in the deeper circuit. The total squarred error for the exponential fit are 0.0004 (left) and 0.0340 (right) while it is 0.0030 (left) and 0.5452 (right) for the quadratic one. This entails the exponential fit a respectively 7.50 and 16.04 better precision than the quadratic one.}
  \label{fig:rbs-swap-net}
\end{figure}

\section{Block-PEC for concrete applications}
\label{sec:applications}

In this section, we consider multiple applications in the context of computational finance (Section~\ref{sub:option-pricing}) and machine learning (Section~\ref{sub:rbs-pyramids}), and which fit in the framework defined by Block-PEC. 

All numerical simulations were implemented in Python, using \texttt{Cirq}~\cite{cirq}. We use the \texttt{mitiq} library~\cite{mitiq} to represent quasi-probabilistic distributions and execute the Monte Carlo simulations they correspond to. We implemented routines which pre-calculate the coefficients of Block-PEC's quasi-probabilistic distributions, for circuits subjected to dephasing noise, 

For convenience, dephasing noise will be modeled by uncorrelated dephasing noise channels, as described in Appendix~\ref{appendix:dephasing}. In practical implementations on cat-qubit architectures, dephasing noise is correlated. In Appendix~\ref{appendix:detailed-performance-analysis}, we show that for circuits subjected to correlated dephasing noise, Block-PEC yields the same performance gains witnessed in Section~\ref{sub:performance-analysis}.

\subsection{Mitigating errors in quantum option pricing} 
\label{sub:option-pricing}
Option pricing is the task of estimating the payoff (value at expiration date) of option contracts. Building on quantum algorithms for amplitude estimation~\cite{low-depth-qae}, Stamatopoulos et al.~defined a family of quantum Monte Carlo algorithms for option pricing~\cite{Stamatopoulos_2020}. Such algorithms aim to calculate the expectation value of the payoff, by applying amplitude estimation to a unitary operator $\mathcal A$ such that
\begin{align}
    \mathcal A \ket{0}^{\otimes (n+1)}
    = \sqrt{1-a}\ket{\psi_0}\ket 0
    + \sqrt{a}\ket{\psi_1}\ket 1
\end{align}
for some normalized states $\ket{\psi_0}$ and $\ket{\psi_1}$, where $a \in [0,1]$ is the expectation value to estimate.
The operator $\mathcal A$ is formed by a data loading unitary $\mathcal P \ket{0}^{\otimes  n} = \sum_x \sqrt{p_x} \ket x$ and a payoff unitary $\mathcal R \ket{x} \ket{0} = \ket{x}\left(\sqrt{f(x)}\ket 1 + \sqrt{1-f(x)} \ket 0\right)$. 

The unitary $\mathcal R$ implements a piece-wise linear function $f: \{0, \ldots, 2^n-1\} \to [0,1]_\RR$ as
\begin{align}
    \mathcal R: \ket{j}\ket{0} \mapsto \ket{j}
    \left(\cos(f_j)\ket{0}+\sin(f_j)\ket{1}\right),
\end{align}
where $f_j = jf_1+f_0$.

The term $f_j$ is implemented by the $j$-th qubit, which controls a $Y$-rotation of angle $\theta_j = 2^jf_1$ on the ancilla qubit, at the exception $f_0$ which is implemented by a $Y$-rotation on the ancilla qubit. $Y$-rotations can be implemented as 
\begin{align}
    Y(\theta) = S H Z(\theta) H S^\dagger
\end{align}
with the phase gate implemented as $S=Z(\pi/2)$, so that Y-rotations are implemented using only Pauli-$Z$ compatible gates. Then, controlled $Y$-rotation can be implemented with only Pauli-$Z$ compatible gates as
\begin{align}
    CY(\theta) = Y\left(\frac{\theta}{2}\right)_k CNOT_{j,k} Y\left(-\frac{\theta}{2}\right)_k CNOT_{j,k}. 
    \label{eq:cy}
\end{align}
We refer the interested reader to~\cite{Stamatopoulos_2020} for a detailed explanation of the definition of the angles $\theta_j$ in order to compute the expectation value of an option with payoff $f$. 

The unitary $\mathcal R$ is implemented by the circuit 
\begin{align}
    \mathcal R = CY(\theta_n)_{n-1,n} \cdots CY(\theta_1)_{0,n} Y(\theta_0)_n. 
\end{align}

Each implementation of $CY(\theta_j)$ given by Equation~\ref{eq:cy} contains one circuit pattern $U_{(a)}$ as $Y\left(\pm\frac{\theta_j}{2}\right)_n CNOT_{j-1,n}$. With $n-1$ repetitions of the circuit pattern $U_{(a)}$ in the implementation of $\mathcal R$, one can expect small gains when using Block-PEC as the sampling cost reduction for such sub-circuits is of lesser magnitude than for other types of sub-circuits.

To quantify the efficiency of Block-PEC on this algorithm, we simulated the mitigation of errors on the payoff unitary $\mathcal R$. The results of our numerical simulations are aggregated in Figure~\ref{fig:violins-opt}. We have simulated the mitigation of errors on $4$-qubit payoff functions, where we chose to implement them using  the unitary $\mathcal R_4 = Y(\theta_0)CY(\theta_1)_{0,4}\cdots CY(\theta_4)_{3,4}$ for randomly chosen angles $\theta_0,\cdots,\theta_4$. Taking $p=10^{-2}$, we computed the average error of PEC and Block-PEC over $1000$ samples. We observed that the sampling variance ratio is $\frac{\sigma^2_{\text{std}}}{\sigma^2_{\text{blk}}} \simeq 1.0375$, which is in line with our expectation of a small sampling variance reduction when using Block-PEC over PEC.

\begin{figure}
    \centering
    \begin{tabular}{l}
        \includegraphics[width=1.1\linewidth]{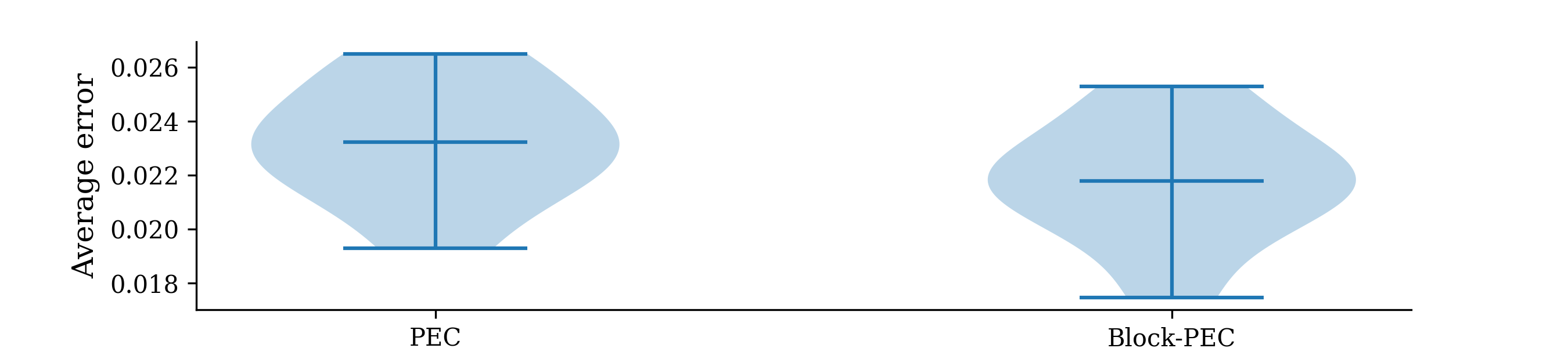}
    \end{tabular}
    \caption{Average sampling error of PEC and Block-PEC on $4$-qubit payoff functions, at $p=0.01$; number of samples = 1000; sample variances: $\sigma^2_{\text{std}} \simeq 5.05e^{-6}$, $\sigma^2_{blk} \simeq 4.87e^{-6}$.}
    \label{fig:violins-opt}
\end{figure}

\subsection{RBS-based parameterized quantum circuits}
\label{sub:rbs-pyramids}
In the use of $RBS$ gates, a common construction involves executing $RBS$ gates for various angles following a pyramidal structure, as shown in Figure~\ref{tab:matchgate-architectures}. For example, any orthogonal neural network can be mapped onto a pyramid of $RBS$ gates acting on unary states~\cite{onn}. The training time of quantum (orthogonal) neural networks is in $O(1/\varepsilon^2)$, where $\varepsilon$ is the accuracy of the estimation of the circuit's output. Mitigating errors increases accuracy, and therefore reduces training time.

It is worth noting that using a unary encoding allows to detect errors by discarding non-unary output states, which can only appear when a bit flip occurred during the computation. While efficient, this is a potentially costly strategy. A low bit-flip error rate architecture makes the cost of this non-unary state error detection strategy negligible. The sampling cost improvement that Block-PEC yields for RBS-based parameterized quantum circuits depends on the particular architecture considered and the implementation of $RBS$ gates. And since the calculation of the coefficients of the quasi-probabilistic distributions for Block-PEC does not depend on the angles of $RBS$ gates, it is sufficient to calculate them once for every $RBS$-based parameterized circuit architecture, further reducing the classical overhead of our Block-PEC approach.

We observed that PEC requires more than twice as many quantum samples as Block-PEC when applied to a 8-qubit $RBS$ pyramid subjected to a phase-flip error probability of $p=0.01$. Since each implementation of $RBS(\theta)$ given by Equation~\ref{eq:cy} contains one circuit pattern $U_{(b)}$ as $XCZ(\theta)_{j,k} CNOT_{j,k}$, one can expect noticeable gains when using Block-PEC. To quantify the efficiency of Block-PEC on this algorithm, we simulated the mitigation of errors on $4$-qubit $RBS$ pyramids at $p=10^{-2}$. We computed the average error of PEC and Block-PEC over 1000 samples. We observed that the sampling variance ratio is $\frac{\sigma^2_{\text{std}}}{\sigma^2_{blk}} \simeq 1.6847$, which suggests a noticeable improvement of the sampling variance when using Block-PEC (see Figure~\ref{fig:violins-rbs}).

\begin{figure}
    \centering
    \begin{tabular}{l}
        \includegraphics[width=1.1\linewidth]{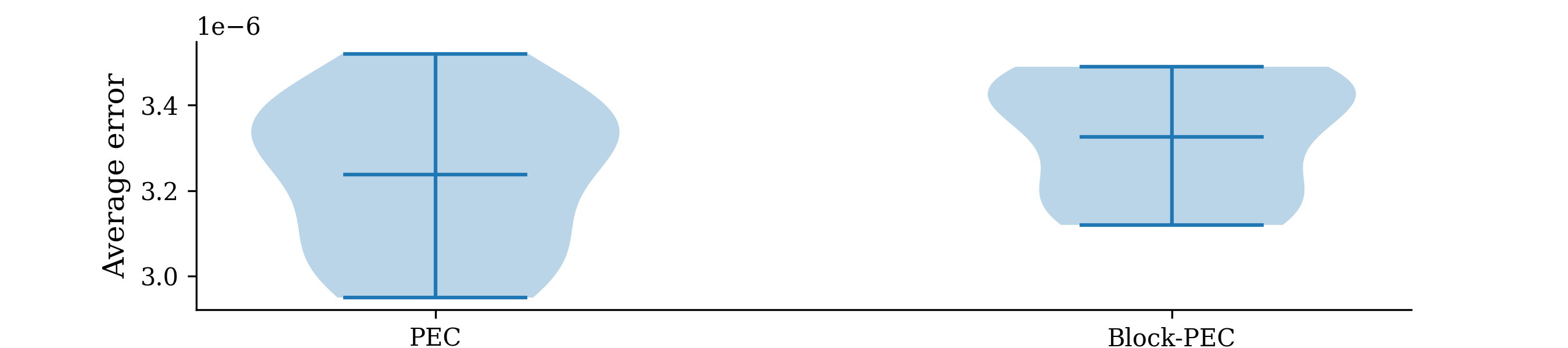}
    \end{tabular}
    \caption{Average sampling error of PEC and Block-PEC on $4$-qubit $RBS$ pyramids at $p=10^{-2}$; number of samples = 1000; sample variances: $\sigma^2_{\text{std}} \simeq 3.51e^{-14}$, $\sigma^2_{blk} \simeq 2.08e^{-14}$.}
    \label{fig:violins-rbs}
\end{figure}

\section{Conclusion}
\label{sec:conclusion}
We have introduced Block-PEC, a variation of the probabilistic error cancellation for error mitigation that takes into account the specifics of the noise model and the propagation of errors to lower the overhead of using noisy quantum computers in lieu of perfect ones. The  model that we consider corresponds to biased noise qubits as implemented by superconducting cat-qubit architectures.

Our technique provides a provable advantage for single layer circuits  and extends to an exponential advantage when several bias-preserving gates follow one another. It constructs an aggregate noise model that can be inverted all at once instead of a per gate basis. This results in a lower sampling cost for achieving a given precision at the cost of classical pre-computation.

We performed numerical simulations for assessing typical performance on several classes of circuits. In accordance with our theoretical results, Block-PEC is most efficient for deeper circuits, as the cost of the classical pre-computation increases exponentially with the width of the circuit but only linearly with its depth.  For example, for 9-qubit circuits with $RBS$ gates in three repetitions of a $SWAP$ network, the gain is already about 4 even at $p=0.001$, and for pyramids of $RBS$ gates it is expected to reach 12 for 30 qubits under the same noise strength. Using these simulation results, we argued that for applications such as option pricing or orthogonal neural networks, Block-PEC leads to a reduction of the sampling variance.

We have shown that this advantage carries over whenever sub-circuits made out of several bias-preserving gates can be identified within the full computation. This is possible by performing a hybrid simulation of the full circuit. The bias-preserving sub-circuits are sampled according to the aggregate noise model defined by Block-PEC while the other gates are simply sampled according to the regular PEC technique. Then, one run of the hybrid simulation corresponds to independently sampling each of the sub-circuits and non bias-preserving gates. Similar results could be obtained for Clifford circuits under Pauli noise.

Moreover, applying our per-block reasoning to Noise-Extended PEC (NEPEC)~\cite{nepec}, which generalizes both PEC and Zero-Noise Extrapolation (ZNE), means that mitigating errors by block can be an efficient strategy in the context of ZNE. 

Finally, this work also exemplifies that cat-qubits architectures can be advantageous to use in the NISQ-regime thanks to their noise model that can be more efficiently handled by error mitigation techniques. As such this raises two related questions: what other error mitigation techniques can benefit from biased noise qubits  and, more generally, are there architectures for which error mitigation techniques can be tailored and made particularly efficient for end-to-end applications? 

\paragraph*{Conflict of Interest.}
The authors declare absence of conflict of interest.

\paragraph*{Acknowledgements.}
This research was supported by Région Île-de-France under project QML-CAT agreement 22002432 PAQ 2022.  The authors would like to thank J\'er\'emie Guillaud and Romain Kukla for numerous fruitful discussions during project meetings.  The authors are grateful to the CLEPS infrastructure from INRIA Paris for providing resources and support in obtaining the numerical results reported here.

\bibliographystyle{acm}
\bibliography{catpec.bib,qubib.bib}

\appendix

\section{A primer in probabilistic error cancellation}
\label{app:pec}

Quantum error mitigation techniques aim to minimize the estimation bias when sampling from noisy circuits. The Probabilistic Error Cancellation (PEC~\cite{zne-pec}) approach performs a Monte Carlo simulation of the inverse of the noise model, in order to remove this estimation bias: from a generating set of noisy (implementable) gates, the inverted noise model is approximated as a linear combination of noisy gates. Therefore, bounds on the sampling cost of PEC are tied to the quality of available physical implementations.

The Monte Carlo sampling is applied for a given circuit $\pop U = \pop U_d \circ \cdots \circ \pop U_1$ formed by $d$ gates where each ideal gate $\pop U_l$ is decomposed as the linear combination of implementable noisy gates
\begin{align}
\label{eq:pec-qpd}
\pop U_l = \gamma_l \sum_{1 \leq j \leq J} p_{j,l} s_{j,l} 
\widetilde{\pop B_{j,l}},
\end{align}
and under the assumption that $p_j=|\alpha_{j,l}|/\gamma_l$, $\gamma_l=\sum_j |\alpha_{j,l}|$ and $s_{j,l} \in \{-1,1\}$ are efficiently computable.

This stochastic simulation operates as follows:
\begin{enumerate}
    \item \label{it:mc1} Sample a noisy gate $\widetilde{\pop B_{j,l}}$ for each ideal gate $\pop U_l$ with probability $p_{j,l}$;
    \item \label{it:mc2} Execute the circuit corresponding to the channels $\widetilde{\pop B_{j,d}} \cdots \widetilde{\pop B_{j,1}}$, and measure the observable $O$, obtaining outcome $o$;
    \item\label{it:mc3} Multiply the outcome $o$ by $\gamma_{\text{total}} = \prod_l \gamma_l$ and $\text{sgn}_{\text{total}} = \prod_j s_j$;
    \item \label{it:mc4} Repeat those three steps $S$ times and compute the average $\mu$ of the product $o$.
\end{enumerate}

In this stochastic simulation, an unbiased estimator of $\pop U_l$ is given by
\begin{align}
\widehat{\pop U}_l = \gamma_l s_{j_l} \widetilde{\pop B_{j_l}},
\end{align}
with variance $O(\gamma_l^2)$, so that an unbiased estimator of $\pop U$ is given by 
\begin{align}
\widehat{\pop U}_d \cdots \widehat{\pop U}_1
= \gamma_{\text{total}}\text{sgn}_{\text{total}}
\widetilde{\pop B_{j_d}} \ldots \widetilde{\pop B_{j_1}},
\end{align}
with variance $O(\gamma_{\text{total}}^2)$.

In other words, the stochastic simulation aims to approximate the expectation value $\langle O\rangle_{\text{ideal}} = \Tr[O\pop U(\rho_0)]$ for some observable $O$, given the initial state $\rho_0 = \ketbra{0}^{\otimes n}$. From the quasi-probabilistic distribution defined in Equation~\ref{eq:pec-qpd}, the expectation value of $\pop U$ can be calculated from the expectation value 
\begin{align}
    \langle O\rangle_{j_1,\ldots,j_d} = \Tr[O\widetilde{\pop B_{j_d}} \ldots \widetilde{\pop B_{j_1}}(\rho_0)]
\end{align}
of the noisy gates $\widetilde{\pop B_{j_d}}, \ldots, \widetilde{\pop B_{j_1}}$, that is
\begin{align}
    \langle O\rangle_{\text{ideal}} = \sum_{j_1,\ldots,j_d} \alpha_{j_1}\cdots\alpha_{j_d} \langle O\rangle_{j_1,\ldots,j_d}.
\end{align}

It is the sampling cost $\gamma_{\text{total}}$ which determines the efficiency of PEC as a quantum error mitigation approach. 

Hoeffding's inequality states that the probability that the estimate $\mu$ differs from the true mean value by more than $\delta > 0$ is at most $2\exp\left(-\frac{2S\delta^2}{\gamma_{\text{total}}^2}\right)$,
so that the sample average deviates from the true mean value with variance $\sigma^2 \in O\left(\frac{\gamma_{\text{total}}^2}{S\delta^2}\right)$. Thus the number of samples $S$ required to attain precision $\delta > 0$, with probability at least $1-\varepsilon$, is 
\begin{align}
    S \geq \frac{\gamma_{\text{total}}^2}{2\delta^2} \ln(\frac{2}{\varepsilon}).
\end{align}

\subsection{Solving the noise inversion equation}
Consider the noisy layered circuit
\begin{align}
\widetilde{\pop U} = \widetilde{\pop U}_d \circ \cdots \circ \widetilde{\pop U}_1
\end{align}
defined by a succession of implementable noisy gates, which each satisfy the equality $\widetilde{\pop U}_l= \pop N_l \circ \pop U_l$,
where $\pop N_l$ is the noise map associated to the gate $\pop U_l$.

Now, assuming that $\widetilde{\pop U}_l$ and $\pop U_l$ are known, one can express the inverse $\pop N^{-1}_l = \pop U_l \circ \widetilde{\pop U}_l^{-1}$ of the noise map, 
as a linear combination of unitary channels and preparation channels
\begin{align}
\label{eq:inv-noise}
\pop N_l^{-1} = \sum_j \alpha_{j,l} \pop V_{j,l},
\end{align}
so that each unitary $\pop U_l$ can be rewritten as
\begin{align}
\pop U_l = \pop N_l \circ \pop N^{-1}_l \circ \pop U_l
= \sum_j \alpha_{j,l} \nop B_{j,l},
\end{align}
We define $\widetilde{\pop B_{j,l}} = \pop N_l \circ \pop V_{j,l} \circ \pop U_l$ to be the noisy gate associated to each unitary/preparation channel $\pop V_{j,l}$.

The inverse $\pop N_l^{-1}$ of the noise channel can easily be obtained by observing that determining coefficients $\alpha_{j,l}$ such that Equation~\ref{eq:inv-noise} holds means solving the equation
\begin{align}
\pop I^{\otimes n_l} = \pop N_l \circ \sum_j \alpha_{j,l}  \pop{V}_{j,l},
\end{align}
where $n_l$ is the number of qubits on which the operator $\pop U_l$ acts non-trivially.

\subsection{Parameters influencing the sampling cost}
Minimizing the number of samples needed means solving the following Linear Program (LP) for each $\mathcal U_l$ (which acts non-trivially on $n_l$ qubits):
\begin{align}
&\text{minimize } \sum_{j=1}^J |\alpha_j| \nonumber\\
&\text{subject to }
\pop U_l = \sum_{1 \leq j \leq J} \alpha_j \widetilde{\pop B_{j,l}},
\end{align}
where all the noisy implementable unitary operations $\widetilde{\pop B_{j,l}}$ are in the support of $\pop U_l$ (as a linear combination), assuming that $\pop U_l$ has a non-trivial support and on few qubits.
We take the solution (if feasible) of this LP to be $\gamma_{n_l}$, so that $p_j = \frac{|\alpha_j|}{\gamma_{n_l}}$ and $s_j = \text{sgn}(\alpha_j)$.

We do so for all $\pop U_l$, so that $\pop U$ is obtained as a product of the solutions of the LPs:
\begin{align}
    \pop U = \gamma_{\text{total}} \sum_{j_1,\ldots,j_d} \alpha_{j_1,\ldots,j_d} \widetilde{\pop{B}_{j_d,d}} \cdots \widetilde{\pop{B}_{j_1,1}},
\end{align}
with $\gamma_{\text{total}} = \gamma_{n_d} \cdots \gamma_{n_1}$ and $\alpha_{j_1,\ldots,j_d} = \alpha_{j_d} \cdots \alpha_{j_1}$, 
where $\alpha_{j_l}= p_{j_l} s_{j_l} \gamma_{n_l}$
for each $j_l$,
so that $p_{j_1,\ldots,j_d} = p_{j_d} \cdots p_{j_1}$ and $s_{j_1,\ldots,j_d} = s_{j_d}\cdots s_{j_1}$.

\section{Dephasing noise channels}
\label{appendix:dephasing}

In the present work, dephasing noise channels are linear combinations of Pauli-$Z$ channels. A dephasing noise channel over a set of qubits $A$ can be mathematically represented as linear combination
$\sum_{B \subseteq A} \text{prob}(B) \pop Z_{[B]}$ of Pauli quantum channels $\pop Z_{[B]}$ (where prob is a probability distribution over the partition of $A$), associated to the Pauli unitary $Z_{[B]}$ defined for $B 
\subseteq A$ by 
\begin{align}
    Z_{[B]} = \left(\bigotimes_{i \in B} Z_i\right)
    \otimes \left(\bigotimes_{i \in A \setminus B} I_i\right),
\end{align}
so that $Z_\emptyset = I^{\otimes n}$, and $m=|A|$ is the number of qubits on which $Z_A$ acts non-trivially.

\subsection{Inverting uncorrelated dephasing noise}
Typically (see e.g.~\cite{limits-pec}), uncorrelated dephasing noise channels are defined over a set of qubits $A$ as
\begin{align}
\label{eq:uncorrolated-dephasing-noise}
\pop F_p^{[A]} &= \bigotimes_{j \in A} \left((1-p)\pop I_{j} + p\pop Z_{j}\right)
\end{align}

Now, observe that $\pop F_p^{[A]}$ can be rewritten as 
\begin{align}
    \pop F_p^{[A]} = \sum_{B \subseteq A:|B|=m} (1-p)^{n-m}p^m \pop Z_{[B]},
\end{align}

The inverted noise channel $\left(\pop F_p^{[\{j\}]}\right)^{-1}$ on one qubit is given by
\begin{align}
    \left(\pop F_p^{[\{j\}]}\right)^{-1} = \gamma_1 \left((1-p)\pop I_{j} - p \pop Z_{j}\right)
\end{align}
for $\gamma_k = \frac{1}{(1-2p)^k}$.

Then, for an arbitrary qubit set $A$ of size $n$, we define the inverted noise channel $\left(\pop F_p^{[A]}\right)^{-1}$ by
\begin{align}
    \left(\pop F_p^{[A]}\right)^{-1} &= \gamma_n \left(\bigotimes_{j \in A}\left((1-p)\pop I_{j} - p \pop Z_{j}\right) \right)\nonumber\\
    &= \gamma_n \sum_{B \subseteq A:|B|=m} (1-p)^{n-m}(-p)^m \pop Z_{[B]}.
\end{align}

One can verify that 
\begin{align}
    \gamma_{\text{total}} &= \gamma_n \sum_{m=0}^n {p^m(1-p)^{n-m}} {n \choose m} \nonumber\\
    &= \gamma_n 1 = \gamma_1^n.
\end{align}

\subsection{Approximating impure dephasing noise}

Let us justify our use of a pure dephasing noise model, instead of a model which allows for some depolarizing noise. For simplification we omit in this section the indices of the qubits that noise channels are defined on. 

Consider the local depolarizing noise model given by 
\begin{align}
\pop D^{(1)}_p
&= (1-p)\pop I
+ \frac{p}{3}(\pop X + \pop Y + \pop Z),
\text{ and }
\pop D^{(n)}_p = \left(\pop D_p^{(1)}\right)^{\otimes n}.
\end{align}
One can define an impure local dephasing noise model, which allows for some depolarizing noise, as the noise channels given for any $q \geq 0$ by
\begin{align}
\pop L^{(1)}_{p,q}
&= (1-p)\pop I + p\left(\frac{q}{q+1}\pop Z+\frac{1}{3}\frac{1}{q+1}(\pop X + \pop Y + \pop Z)\right),\\
\pop L_{p,q}^{(n)} &= \left(\pop L_{p,q}^{(1)}\right)^{\otimes n}, 
\end{align}
so that $\pop L_{p,q}^{(n)}$ is the local depolarizing noise for $q=0$ and the local dephasing noise when $q$ tends to infinity.

The distance between the quantum channels $\pop F^{(1)}_p$ and $\pop L^{(1)}_{p,q}$ is defined as the diamond norm of their difference, which is such that
\begin{align}
    \left\|\pop F^{(1)}_p - \pop L^{(1)}_{p,q}\right\|_\diamond = \frac{2p}{3(q+1)} \left\|\pop Z - \frac{1}{2}(\pop X+\pop Y)\right\|_\diamond.
\end{align}
Therefore, the bias incurred by the impure dephasing noise model, quantified as the distance between the pure and impure dephasing noise channels, is proportional to $\frac{2p}{3(q+1)}$.

Observing that
\begin{align}
\left(\pop F^{(1)}_p\right)^{-1}
&= \gamma_1 \left(\left(1-p\right)\pop I-p\pop Z\right), \\
\left(\pop L^{(1)}_{p,q}\right)^{-1}
&= \gamma_1 \left((1-p)\pop I - p \frac{3q+1}{3(q+1)} \pop Z\right. \nonumber\\
&\left.- p \frac{1}{3(q+1)} \left(\pop X+\pop Y\right)\right),
\end{align}
with $\gamma_1 = \frac{1}{1-2p}$,
it follows that the distance $\left\|\left(\pop F^{(1)}_p\right)^{-1} - \left(\pop L^{(1)}_{p,q}\right)^ {-1}\right\|_\diamond$ between the pure and impure dephasing inverted noise channels is proportional to $\gamma_1\frac{2p}{3(q+1)}$.

For cat-qubit architectures, the effective bit flip rate is proportional to $e^{-c\overline{n}}$ where $c$ is a constant and $\overline{n}$ is the mean number of photons (the ``cat size''), which comes at the cost of a higher phase flip rate, with a rate increase proportional to $\overline n$~\cite{QCwithCats}. 

For a realistic description of the noise model of cat-qubit architectures, one could take the noise channel $\pop L^{(1)}_{p,q}$ with $q = \frac{e^{c\overline{n}}}{\overline n} - 1$, so that $\frac{1}{q+1} = \overline n e^{-c\overline{n}}$. However, for $c=2$, $q$ is of the order of $10^3$ and $10^7$ for $\overline n$ equal to $5$ and $10$ respectively (which are both reasonable values for $\overline n$ under current experiments~\cite{100secs-bitflip,QCwithCats,reglade2023}). Therefore, in the context of PEC-like error mitigation protocols, pure dephasing noise channels constitute a good approximation of the noise associated to bias-preserving circuits implemented on cat-qubits.

\section{Bias-preserving gates}
\label{sec:bias-preserving}

Informally, bias-preserving gates are quantum gates which have the property that they do not turn phase-flip errors into bit-flip errors. In other words, phase-flip errors are mapped to phase-flip errors, preserving phase-flip biases. We show that bias-preserving gates have matrix representations which can be described as generalized permutation matrices.

\begin{definition}
A $n$-qubit gate is bias-preserving if its unitary matrix representation $G$ is such that for every $n$-by-$n$ unitary diagonal matrix $D$, there is a $n$-by-$n$ unitary diagonal matrix $D'$ such that $GDG^\dagger=D'$, i.e. $GD=D'G$.
\end{definition}

In other words, a $n$-qubit gate is bias-preserving if its matrix representation is in the normalizing group $N(\mathcal D_n)$ of the group $\mathcal D_n$ of $n$-by-$n$ unitary diagonal matrices. To characterize $N(\mathcal D_n)$, we introduce the notion of generalized permutation matrix.

\begin{definition}
A generalized permutation matrix is a matrix such that each column/row has exactly one non-zero entry.
\end{definition}

Generalized permutation matrices are known to be the normalizer of diagonal matrices, that is, they form the largest subgroup of $GL_n(\CC)$ such that diagonal matrices are normal (i.e. for every $g 
\in GL_n(\CC)$, if $d$ is a $n$-by-$n$ diagonal matrix then so does $gdg^\dagger$). Considering unitary matrices within this context leads to the following proposition.

\begin{proposition}
A gate is bias-preserving if its matrix representation is in $N(\mathcal D_n) = \mathcal P_n \mathcal D_n$.
\end{proposition}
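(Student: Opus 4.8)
The content of the statement is really the group equality $N(\mathcal D_n) = \mathcal P_n \mathcal D_n$: once this is in hand, the equivalence ``$G$ bias-preserving $\iff$ $G \in N(\mathcal D_n)$'' is immediate from the definition of bias-preserving gate. So the plan is to prove the two inclusions separately, treating the easy one first.

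For $\mathcal P_n \mathcal D_n \subseteq N(\mathcal D_n)$, I would write $G = P D_0$ with $P \in \mathcal P_n$ a permutation matrix and $D_0 \in \mathcal D_n$, and take an arbitrary $D \in \mathcal D_n$. Since diagonal matrices commute, $G D G^\dagger = P D_0 D D_0^\dagger P^\dagger = P D P^\dagger$, and conjugating a diagonal matrix by a permutation matrix merely permutes its diagonal entries, so $G D G^\dagger \in \mathcal D_n$. Setting $D' = G D G^\dagger$ verifies the defining property, hence $G \in N(\mathcal D_n)$.

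The substantive direction is $N(\mathcal D_n) \subseteq \mathcal P_n \mathcal D_n$. Let $G$ be unitary with $G \mathcal D_n G^\dagger \subseteq \mathcal D_n$ (the map $D \mapsto G D G^\dagger$ is injective and linear on the finite-dimensional space of diagonal matrices and maps into it, so it is in fact a bijection of $\mathcal D_n$, i.e. this really is a normalizer condition). The key device is to pick a single \emph{generic} $D_0 \in \mathcal D_n$, namely one whose diagonal entries $\lambda_1, \dots, \lambda_N$ are pairwise distinct — possible because the unit circle is infinite. Then $D_1 := G D_0 G^\dagger \in \mathcal D_n$ has the same spectrum with multiplicities as $D_0$ (unitary conjugation preserves the spectrum), and being diagonal its diagonal entries are exactly that spectrum, hence again pairwise distinct, say with $\lambda_k$ occurring in position $\sigma(k)$. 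From $G D_0 = D_1 G$ one gets that $G$ maps the $\lambda_k$-eigenspace of $D_0$ — which is the coordinate line $\CC e_k$, one-dimensional by distinctness — onto the $\lambda_k$-eigenspace of $D_1$, which is the coordinate line $\CC e_{\sigma(k)}$. Thus $G e_k = c_k e_{\sigma(k)}$ with $|c_k| = \|G e_k\| = 1$, and since the $\sigma(k)$ are distinct, $\sigma$ is a permutation. Therefore $G = P_\sigma \, \mathrm{diag}(c_1, \dots, c_N) \in \mathcal P_n \mathcal D_n$, completing the proof.

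The one delicate point — and the main ``obstacle'' — is precisely the choice of a diagonal matrix with pairwise distinct entries: this is what makes the eigenspace decomposition rigid, forcing an intertwiner to permute the coordinate axes rather than merely permute some coordinate subspaces of higher dimension. Everything else (commutation of diagonal matrices, invariance of spectrum under conjugation, and the passage from ``$G$ maps coordinate lines to coordinate lines'' to ``$G$ is monomial'') is routine linear algebra. An alternative to the distinctness trick, if one prefers, is to differentiate $t \mapsto G e^{itH} G^\dagger$ at $t=0$ for Hermitian diagonal $H$ to deduce that $G$ conjugates each rank-one diagonal projector onto a rank-one diagonal projector, which produces the permutation $\sigma$ directly; this uses slightly more machinery but reaches the same conclusion.
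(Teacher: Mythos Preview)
Your proof is correct. It is, however, organized quite differently from the paper's. The paper does \emph{not} prove the inclusion $N(\mathcal D_n)\subseteq \mathcal P_n\mathcal D_n$ from scratch: it simply invokes the classical fact that the normalizer of the diagonal matrices in $GL_n(\CC)$ consists of the generalized permutation matrices, and then spends its effort on the (easy) observation that any generalized permutation matrix factors as $M_{\sigma,(1,\dots,1)}\,M_{\mathrm{id},m}$, i.e.\ as a permutation matrix times a diagonal one, with the unitary case forcing $|m_j|=1$. Your argument instead establishes the hard inclusion directly via the ``generic diagonal'' trick: choosing $D_0\in\mathcal D_n$ with pairwise distinct entries so that the intertwining relation $GD_0=D_1G$ forces $G$ to permute one-dimensional coordinate eigenspaces. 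This makes your write-up self-contained where the paper's is not, at the price of a slightly more sophisticated spectral step; the paper's approach is more elementary bookkeeping but leans on an external result. Both land in the same place.
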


\begin{proof}
Elements of the group $N(\mathcal D_n)$ are unitary generalized permutation matrices. Observe that for every generalized permutation $n$-by-$n$ matrix $M$, there is a permutation $\sigma \in S_n$ and a complex vector $m = (m_1,\ldots,m_n) \in \CC^n$ such that
\begin{align}
M_{i,j} = \left\{
    \begin{array}{ll}
        m_j & \mbox{if } i =\sigma(j) \\
        0 & \mbox{otherwise.}
    \end{array}
\right.
\end{align}
Now, consider two generalized permutation matrices $M$ and $M'$, respectively characterized by the pairs $(\sigma,m)$ and $(\sigma',m')$. Then the multiplication of $M$ by $M'$ is given componentwise by
\begin{align}
(MM')_{i,j} = \left\{
    \begin{array}{ll}
        m_{\sigma'(j)}m'_j & \mbox{if } i =\sigma\sigma'(j) \\
        0 & \mbox{otherwise.}
    \end{array}
\right.
\end{align}
Writing $M_{\sigma,m}$ for the generalized permutation matrix characterized by the permutation $\sigma$ and the complex vector $m$, we observe that
\begin{align}
M_{\sigma,m} = M_{\sigma,(1,\ldots,1)}M_{\text{id},m}
\end{align}
where $M_{\sigma,(1,\ldots,1)}$ is a permutation and $M_{\text{id},m}$ is a diagonal matrix.
In other words, every generalized permutation matrix is the product of a permutation and a diagonal matrix. This statement is in particular true for unitary generalized permutation matrices, for which the characteristic vector $m$ is such that $m_j \neq 0$ and $m_j\overline{m_j} = 1$ (since the eigenvalues of any unitary matrix have modulus 1).
\end{proof}

To put it simply, bias-preserving gates are normalizers of diagonal matrices in the group of unitary matrices. This normalizer is fully characterized by products of permutation matrices and diagonal unitaries. 

\section{Partially bias-preserving gates}
\label{app:partially-bias-preserving}

This section recalls the definition of partially bias-preserving gates, introduces $RBS$ gates, before giving a detailed proof that $RBS$ gates are partially bias-preserving.

\begin{definition}
Consider $\mathcal S$ to be a subspace of the Hilbert space of $n$ quantum states.
A gate is $\mathcal S$-bias-preserving if its unitary matrix representation $G$ is:
\begin{enumerate}
    \item stable under $\mathcal S$, i.e. $\ket \phi \in \mathcal S \implies G\ket \phi \in \mathcal S$;
    \item such that for every $n$-by-$n$ unitary diagonal matrix $D$, there is a $n$-by-$n$ unitary diagonal matrix $D'$ such that, for every state $\ket\phi \in \mathcal S$, $GD\ket\phi=D'G\ket{\phi}$.
\end{enumerate}
\end{definition}

A natural strategy for error mitigation when sampling from such gates is to exclude measurement outcomes which do not belong to the set $\mathcal S$ (i.e. measurement outcomes which are not of the form $\sum_j \alpha_j \ket{\phi_j}$ for $\ket{\phi_j} \in \mathcal S$), and then apply the same error mitigation strategies that we develop for bias-preserving gates. 

\subsection{Reconfigurable Beam Splitter (RBS) gates}
\label{sub:rbs}

The $RBS$ (Reconfigurable Beam Splitter) gate is a gate which performs a rotation in the Hilbert space spanned by the unit vector $\ket{01}$ and $\ket{10}$, so that
\begin{align}
\label{eq:rbs-01}
RBS(\theta)\ket{01} &= \cos(\theta)\ket{01} - \sin(\theta)\ket{10}\\
\label{eq:rbs-10}
RBS(\theta)\ket{10} &= \cos(\theta)\ket{10} + \sin(\theta)\ket{01}
\end{align}
while acting as the identity on $\ket{00}$ and $\ket{11}$.

This $RBS$ gate is particularly useful when its input subspace is restricted to states of Hamming distance 1, as the $RBS$ gate preserves this property. The gate iRBS shares the same properties, since the gate iRBS is such that
\begin{align}
iRBS(\theta)\ket{01} &= \cos(\theta)\ket{01} - i\sin(\theta)\ket{10},\\
iRBS(\theta)\ket{10} &= \cos(\theta)\ket{10} -i\sin(\theta)\ket{01}.
\end{align}
while acting as the identity on $\ket{00}$ and $\ket{11}$.

Using combinations of $RBS$ gates with appropriately defined angles, one can load bit strings into an unary encoded quantum state. Unary data loaders are quantum circuits used to construct states of the form
\begin{align}
\label{eq:unary-data-state}
    \ket x = \frac{1}{\|x\|} \sum_{1 \leq j \leq d} x_j \ket{e_j}
\end{align}
for a given $d$-dimensional vector $\vec x = (x_1,\ldots,x_d)$, where $\ket{e_j}$ is the unary representation of the number $j$, i.e. $e_j = 0^{j-1}10^{d-j}$. They are examples of amplitude encoding routines, as they encode classical data in the amplitudes of a quantum state.

The state in Equation~\ref{eq:unary-data-state} can for example be constructed as follows. Start with the all-$0$-state $\ket{0\ldots0}$ and flip its first qubit. Then, define a set of angles for $RBS$ gates so that when those gates are applied to a qubit $q_n$ and its neighbor $q_{n+1}$ successively, one obtains the target state. Writing $\theta_0 = \arccos(x_1), \theta_1 = \arccos(x_2 / \sin(\theta_0)), \theta_2 = \arccos(x_3 / (\sin(\theta_0)\sin(\theta_1))), \ldots$,
and applying $RBS(\theta_n)$ to the pair of qubits $(q_n,q_{n+1})$, one can construct the target state in Equation~\ref{eq:unary-data-state}. This is the diagonal unary data loader, as represented by the circuit in Figure~\ref{fig:unary-data-loader} for an $8$-point dataset.

\begin{figure}
    \centering
    \includegraphics[width=0.6\linewidth]{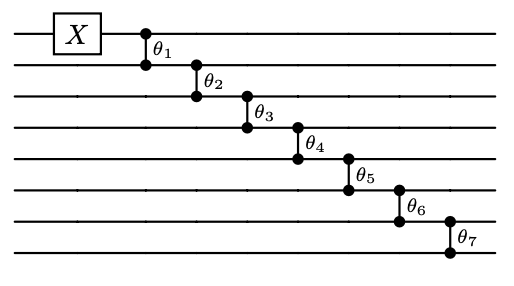}
    \caption{Unary data loaders on 8 qubits (diagonal)}
    \label{fig:unary-data-loader}
\end{figure}

This approach to the encoding of vectors can be extended to matrices. A $n$-by-$m$ matrix $A=[a_{i,j}]_{i,j}$ can be encoded in amplitudes by the quantum state
\begin{align}
    \ket A = \frac{1}{\|A\|} \sum_{i=1}^n\sum_{j=1}^m a_{i,j} \ket{e_j}\ket{e_i}.
\end{align}

The use of $RBS$ gates for unary data loading has found applications to Orthogonal Neural Networks~\cite{onn}, Nearest Centroid Classification~\cite{ncc}, and solving Partial Differential Equations such as Navier-Stokes, using Quantum Fourier Networks~\cite{rbs-pde}.

RBS gates can also be used to implement the inner product of two vectors $x$ and $y$, by loading the vector $y$ and then unloading (i.e.~executing the reversed data loading circuit) the vector $x$. This unary inner product routine is a key element of the implementation of low-depth amplitude estimation~\cite{low-depth-qae}, which finds applications in quantum Monte Carlo simulations~\cite{Montanaro_2015}, leading notably to an unary quantum algorithm for option pricing~\cite{unary-option-pricing}.

\subsection{RBS gates are partially bias-preserving}
While $RBS$ gates have interesting properties with respect to the unary encoding, they are not bias-preserving in the strict sense of the term. Fortunately, $RBS$ gates are partially bias-preserving.

\begin{theorem}
RBS gates $RBS(\theta)$ are $\mathcal S_1$-bias-preserving gates, where $\mathcal S_1$ is the subspace spanned by all $2$-qubit quantum states of Hamming weight $1$ (that is, unary quantum states spanned by $\ket{01}$ and $\ket{10}$). 
\end{theorem}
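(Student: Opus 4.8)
\emph{Proof strategy.} The plan is to verify the two defining conditions of an $\mathcal{S}_1$-bias-preserving gate directly for $G=RBS(\theta)$, with $\mathcal{S}_1=\setspan\{\ket{01},\ket{10}\}$.

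Condition~(1), stability, is immediate from the defining relations \eqref{eq:rbs-01}--\eqref{eq:rbs-10}: $RBS(\theta)\ket{01}$ and $RBS(\theta)\ket{10}$ are linear combinations of $\ket{01}$ and $\ket{10}$, hence $RBS(\theta)\,\mathcal{S}_1\subseteq\mathcal{S}_1$. Phrased in the ordered computational basis $(\ket{00},\ket{01},\ket{10},\ket{11})$, this says $RBS(\theta)$ is block diagonal: the identity on $\setspan\{\ket{00}\}$ and on $\setspan\{\ket{11}\}$, and the rotation $R_\theta=\left(\begin{smallmatrix}\cos\theta & \sin\theta\\ -\sin\theta & \cos\theta\end{smallmatrix}\right)$ on the block $\mathcal{S}_1$. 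I would use this block structure throughout.

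For condition~(2) I would first reduce it to a two-dimensional statement. For $\ket\phi\in\mathcal{S}_1$, both $GD\ket\phi$ and $D'G\ket\phi$ depend on the diagonal unitaries $D,D'$ only through the pair of entries indexed by $\ket{01}$ and $\ket{10}$; writing $\Delta=\mathrm{diag}(d_{01},d_{10})$ and $\Delta'=\mathrm{diag}(d'_{01},d'_{10})$ for these restrictions, the requirement becomes: for every unit-modulus pair $(d_{01},d_{10})$ there is a unit-modulus pair $(d'_{01},d'_{10})$ with $R_\theta\Delta=\Delta'R_\theta$, i.e. $\Delta'=R_\theta\Delta R_\theta^{\dagger}$. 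One then computes $R_\theta\,\mathrm{diag}(d_{01},d_{10})\,R_\theta^{\dagger}$, reads off $\Delta'$ when it is diagonal, and lifts it to a full diagonal $D'$ on the two qubits by letting $D'$ agree with $D$ on $\ket{00}$ and $\ket{11}$ (harmless, since $G$ fixes those vectors).

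The main obstacle sits in exactly this computation. A direct expansion gives the off-diagonal entry of $R_\theta\,\mathrm{diag}(d_{01},d_{10})\,R_\theta^\dagger$ equal to $\tfrac{1}{2}\sin(2\theta)\,(d_{10}-d_{01})$, so a diagonal $\Delta'$ exists for \emph{every} $\Delta$ only when $\theta$ is a multiple of $\pi/2$, while for a generic angle one can only take $\Delta$ that is a scalar on $\mathcal{S}_1$ (then $\Delta'=\Delta$ works for all $\theta$). The real content of the proof is therefore to argue that in the regime where $RBS$ circuits are actually run and error-mitigated --- on the unary sector, with non-unary measurement outcomes discarded as described right after the definition of $\mathcal{S}$-bias-preservation --- the only diagonal perturbations that matter are the dephasing ($Z$-string) channels, and that the $Z$-strings acting on a pair of $RBS$ wires that contain both wires or neither (e.g.\ $Z_jZ_k$, which is $-I$ on $\mathcal{S}_1$, and $Z$-strings disjoint from the wires, which act trivially) form precisely the subgroup commuting with $R_\theta$ up to a $Z$-string. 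Equivalently, one can route the argument through the factorisation $RBS(\theta)_{j,k}=CNOT_{j,k}\,XCZ(\theta)_{j,k}\,CNOT_{j,k}$ of \eqref{eq:rbs-implementation}, each factor of which has already been shown to be Pauli-$Z$ compatible, and compose the per-factor propagation relations of $Z$-strings restricted to $\mathcal{S}_1$. I expect the bulk of the proof to be this bookkeeping --- tracking which $Z$-strings survive restriction to $\mathcal{S}_1$ and how they move through the rotation and the two $CNOT$s --- rather than any single hard estimate.
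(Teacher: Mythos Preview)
Your diagnosis is correct, and in fact sharper than the paper's own argument. The paper proceeds by writing an arbitrary diagonal unitary as $D = Z_1(\mu_1)Z_2(\mu_2)ZZ_{1,2}(\mu_{1,2})$ and then asserting that each factor commutes with $RBS(\theta)$ on $\mathcal S_1$, so that $D'=D$ works. The crucial step is the claim $RBS(\theta)Z_1\ket\phi = Z_1\,RBS(\theta)\ket\phi$ for all $\ket\phi\in\mathcal S_1$ (Eq.~\eqref{eq:rbs-z1}), and this is precisely what your $2\times 2$ computation shows to be false: substituting $Z_1\ket\phi$ into the paper's own identity $RBS(\theta)\ket\psi = \cos\theta\,\ket\psi + \sin\theta\, Z_1X_1X_2\ket\psi$ yields $\cos\theta\, Z_1\ket\phi - \sin\theta\, X_1X_2\ket\phi$ (since $Z_1X_1Z_1=-X_1$), not the $+\sin\theta$ written there. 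Your off-diagonal entry $\tfrac12\sin(2\theta)(d_{10}-d_{01})$ is the clean way to see that no diagonal $D'$ exists whenever $d_{01}\neq d_{10}$ and $\theta\notin\tfrac{\pi}{2}\mathbb Z$; the theorem as literally stated, quantifying over \emph{every} diagonal unitary $D$, does not hold for generic $\theta$.

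Your proposed salvage routes are reasonable directions but do not rescue the statement as written. Restricting to $Z$-strings that act as scalars on $\mathcal S_1$ (namely $I$ and $Z_jZ_k$) gives a true but much weaker claim, and excludes exactly the single-qubit dephasing errors $Z_j$, $Z_k$ that the Block-PEC analysis needs to propagate. The decomposition route $RBS = CNOT\cdot XCZ(\theta)\cdot CNOT$ is also more delicate than it appears: if each factor were genuinely Pauli-$Z$ compatible then so would the composite, forcing $RBS(\theta)\,Z_1\,RBS(\theta)^\dagger$ to be a $Z$-string on the full space, which your computation already rules out; in fact $XCZ(\theta)$ is not a generalized permutation matrix and is not Pauli-$Z$ compatible for generic $\theta$. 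So the obstruction you flag is real and is not closed either by the paper's argument or by the bookkeeping you outline; what survives is a weaker, operationally motivated statement rather than the theorem as phrased.
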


\begin{proof}
First, observe that any state in $\mathcal S_1$ is a linear combination of unary quantum states and that
\begin{align}
RBS\left(\alpha \ket{01} + \beta \ket{10}\right)
= \alpha' \ket{01} + \beta' \ket{10},
\end{align}
for complex numbers $\alpha,\alpha',\beta,\beta'$ satisfying:
\[
\left\{ 
\begin{array}{ll}
    \alpha' &= \alpha \cos(\theta) + \beta \sin(\theta) \\
    \beta' &= \beta \cos(\theta) - \alpha \sin(\theta).
\end{array}
\right.
\]

Consider an arbitrary state $\ket\phi = \alpha \ket{01} + \beta \ket{10}\in \mathcal S_1$. Then $RBS(\theta)\ket\phi$. Moreover, for every state $\ket\phi \in \mathcal S_1$, it holds that 
\begin{align}
    Z_1Z_2\ket\phi &= -\ket\phi,\\
    Z_1\ket\phi &= -Z_2\ket\phi.
\end{align}

From Equation~\ref{eq:rbs-01} and~\ref{eq:rbs-10}, the action of the $RBS$ gate on the subspace $\mathcal S_1$ can be described as
\begin{align}
\label{eq:rbs-xy}
    RBS(\theta)\ket\phi = \cos(\theta)\ket\phi+\sin(\theta)Z_1X_1X_2\ket\phi,
\end{align}
and therefore 
\begin{align}
\label{eq:rbs-z1}
    RBS(\theta)Z_1\ket\phi &= \cos(\theta)Z_1\ket\phi + \sin(\theta) X_1X_2\ket\phi \nonumber\\
    &= Z_1\left(\cos(\theta)\ket\phi + \sin(\theta)Z_1X_1X_2\ket\phi\right) \nonumber \\
    &= Z_1RBS(\theta)\ket\phi. 
\end{align}

Now, consider an arbitrary diagonal unitary $D$ defined by
\begin{align}
    D &= Z_1(\mu_{1}) Z_2(\mu_{2})ZZ_{1,2}(\mu_{1,2})
\end{align}
where 
\begin{align}
    Z_j(\mu_j) &= \cos(\frac{\mu_j}{2})\ket\phi  - i\sin(\frac{\mu_j}{2})Z_j\ket\phi,\\
    ZZ_{1,2}(\mu_j) &= \cos(\frac{\mu_{1,2}}{2})\ket\phi  - i\sin(\frac{\mu_{1,2}}{2})Z_1Z_{2}\ket\phi.
\end{align}

Observing that 
\begin{align}
    RBS(\theta)ZZ_{1,2}(\mu)\ket\phi 
    &= ZZ_{1,2}(\mu)RBS(\theta)\ket\phi,  \\
    \text{ and } RBS(\theta)Z_j(\mu)\ket\phi 
    &= Z_j(\mu)RBS(\theta)\ket\phi,
\end{align}
it follows that $RBS(\theta)D\ket\phi = D\cdot RBS(\theta)\ket\phi$, and therefore the $RBS$ gate is $\mathcal S_1$-bias-preserving.
\end{proof}

This property holds not only for $(i)RBS(\theta)$, but for other parity-preserving gates such as $\text{A}(\theta,\phi)$ (see e.g.~\cite{Barkoutsos_2018,Gard_2020,vpe-vff}), $\text{Givens}(\theta)$ (see e.g.~\cite{Anselmetti_2021,Kivlichan_2018}), $\text{XY}(\theta)$ (see e.g.~\cite{Abrams_2020}). 

\section{Detailed calculations of performance analysis of Block-PEC on bias-preser\-ving gates}
\label{appendix:detailed-performance-analysis}

In this section, we calculate the Block-PEC sampling cost of the circuits  $U_{(a)}=CNOT_{1,2}U_2$, $U_{(b)}=CNOT_{1,2}U_{1,2}$, $U_{(c)}=CNOT_{i,j}U_{j,k}$ ($i \neq k$), directly from their quasi-probabilistic distribution, assuming $p > 0$.

We write $\alpha_0 = \frac{1-p}{1-2p}$ and $\alpha_1 = -\frac{p}{1-2p}$, so that
\begin{align}
    \pop U_2 &= \alpha_0 \pop I_2 \circ \pop U_2 + \alpha_1 \pop Z_2 \circ \pop U_2\\
    \pop U_{1,2} &= \alpha_0^2 \pop I_1 \pop I_2 \circ \pop U_{1,2} \nonumber\\
    &+ \alpha_0\alpha_1 (\pop I_1\pop Z_2 + \pop Z_1\pop I_2) \circ \pop U_{1,2}
    + \alpha_1^2 \pop Z_1\pop Z_2 \circ \pop U_{1,2}
\end{align}

For the circuit $U_{(a)}$, $\gamma_{\text{blk}}(U_{(a)}) = \beta_0 - 3 \beta_1$ with $\beta_0 = \alpha_0^3 + \alpha_1^3 = ((1-p)^3 - p^3) / (1-2p)^3 > 0$ (associated to the channel $\pop I_2$) and $\beta_1 = \alpha_0\alpha_1^2 + \alpha_1\alpha_0^2 = (p^2(1-p) - p(1-p)^2) / (1-2p)^3 < 0$ (associated to $\pop Z_2$), so that 
\begin{align}
\label{eq:blk-ua}
    \gamma_{\text{blk}}(U_{(a)}) = \frac{1+2p-2p^2}{(1-2p)^2} < \frac{1}{(1-2p)^3} = \gamma_{\text{std}}(U_{(a)}).
\end{align}

For the circuit $U_{(b)}$, $\gamma_{\text{blk}}(U_{(b)}) = \beta_0-\beta_1-2\beta_2$, with $\beta_0 = \alpha_0^4 + \alpha_0\alpha_1^3+\alpha_1^2\alpha_0^2+\alpha_1^2\alpha_0\alpha_1 > 0$ (associated to the channel $\pop I_1\pop I_2$), $\beta_1 = \alpha_0^3\alpha_1 + \alpha_1\alpha_0^3 + \alpha_0^2\alpha_1^2 + \alpha_1^4 < 0$ (associated to the channel $\pop Z_1\pop I_2$), $\beta_2 = 2\alpha_0^2\alpha_1^2 + \alpha_0\alpha_ 1^3 + \alpha_1\alpha_0^3 < 0$ (associated to the channels $\pop I_1\pop Z_2$ and $\pop Z_1\pop Z_2$), so that
\begin{align}
\label{eq:blk-ub}
    \gamma_{\text{blk}}(U_{(b)}) &= \frac{1+2p-6p^2+4p^3}{(1-2p)^3} < \frac{1}{(1-2p)^4} = \gamma_{\text{std}}(U_{(b)}).
\end{align}

For the circuit $U_{(c)}$, $\gamma_{\text{blk}}(U_{(c)}) = \beta_0 - \beta_1 - 3(\beta_2 - \beta_3)$, with $\beta_0 = \alpha_0^4+\alpha_1^3\alpha_0 > 0$ (associated to the channel $\pop I_1\pop I_2\pop I_3$), $\beta_1 = \alpha_0^3\alpha_1 + \alpha_1^4 < 0$ (associated to the channel $\pop I_1\pop I_2\pop Z_3$), $\beta_2 = \alpha_0^3\alpha_1 + \alpha_1^2\alpha_0^2 < 0$ (associated to the channels $\pop I_1\pop Z_2\pop I_3$, $\pop Z_1\pop I_2\pop I_3$, $\pop Z_1\pop Z_2\pop I_3$), $\beta_3 = \alpha_1^3\alpha_0 + \alpha_1^2\alpha_0^2 > 0$ (associated to the channels $\pop I_1\pop Z_2\pop Z_3$, $\pop Z_1\pop I_2\pop Z_3$, $\pop Z_1\pop Z_2\pop Z_3$), so that
\begin{align}
\label{eq:blk-uc}
    \gamma_{\text{blk}}(U_{(c)}) &= \frac{1+2p-2p^2}{(1-2p)^3} < \frac{1}{(1-2p)^4} = \gamma_{\text{std}}(U_{(c)}).
\end{align}

\paragraph*{Correlated dephasing noise}
The per-block approach to PEC still yields an improvement in terms of sampling cost when dealing with correlated dephasing noise. Indeed, consider the correlated 
dephasing noise channel over a $n$-qubit set $A$ defined as:
\begin{align}
    \pop C^{[A]}_p = (1-p)\pop I^{\otimes n} + \sum_{I^{\otimes n} \neq B \subseteq A} \frac{1}{2^n-1} p Z_{[B]}, 
\end{align}
inverted as the noise channel
\begin{align}
    \left(\pop C^{[A]}_p\right)^{-1} = \gamma_1\left((1-p) \pop I^{\otimes n} - \sum_{I^{\otimes n} \neq B \subseteq A} \frac{1}{2^n-1} p Z_{[B]}\right), 
\end{align}
with $\gamma_1 = \frac{1}{1-2p}$.

Writing $\alpha_0 = \gamma_1(1-p)$ and $\alpha_1 = -\frac{p}{3}\gamma_1$, we observe for the circuit $U_{(b)}$ that $\gamma_{\text{blk}}(U_{(b)}) = \beta_0 - 3\beta_1$ with $\beta_0 = \alpha_0^2 + 3\alpha_1^2 > 0$ (associated to the channel $\pop I_1\pop I_2$) and $\beta_1 = 2\alpha_0\alpha_1 + 2\alpha_1^2 < 0$ (associated to the channels $\pop Z_1\pop I_2$,  $\pop I_1\pop Z_2$ and  $\pop Z_1\pop Z_2$), so that
\begin{align}
    \gamma_{\text{blk}}(U_{(b)}) = 
    \frac{3-4p^2}{3(1-2p)}
    < \frac{1}{(1-2p)^2} = \gamma_1^2 = \gamma_{\text{std}}(U_{(b)})
\end{align}
for correlated dephasing noise.

A similar analysis can be done for the circuits $U_{(a)}$ and $U_{(c)}$.

\end{document}